\pgfplotsset{compat=1.14}
\pgfplotsset{every axis label/.append style={font=\tiny}}
\newcommand{\R}{{\mathbb R}}
\newtheorem{theorem}{Theorem}
\newtheorem{prop}{Proposition}
\newtheorem{lemma}{Lemma}
\newtheorem{defn}{Definition}
\title{ \vspace*{-2.5cm} \hspace*{-0.5cm}Information About Other Agents in Mechanism Design}
\author{Eric Yan\thanks{Massachusetts Institute of Technology, eyan@mit.edu.} \footnote{This article subsumes my undergraduate thesis at Harvard University. I am indebted to my thesis advisor Eric Maskin for constant and invaluable guidance and Stephen Morris for generous discussions, as well as Michal Kowalik, Shengwu Li, Daniel Luo, Ellen Muir, Matthew Rabin, Madison Shirazi, and Charlie Yang for helpful comments and discussion. All errors are my own.}} 
\date{ June 2025\\}
\begin{document}

\bgroup
\let\footnoterule\relax
\begin{singlespace}
\maketitle

\begin{abstract}
    \noindent We study mechanism design settings where the planner has an interest in agents receiving noisy signals about the types of other agents. We show that additional information about other agents can eliminate undesired equilibria, making it helpful to a planner interested in full implementation — designing a mechanism for which \emph{every} equilibrium outcome is desirable. We provide a sufficient condition under which a social choice function that is not fully implementable when agents have no information about types of other agents can become fully implementable when agents have additional information.
\end{abstract}
\end{singlespace}

\egroup
\setcounter{page}{1}



\section{Introduction\label{sec:introduction}}
\noindent A key challenge of mechanism design is to design a game under which privately informed agents reach a desired outcome in equilibrium. The effectiveness of these mechanisms often depends critically on the information available to agents. Many recent papers have studied how varying what agents know about variables that directly affect their own payoffs — for example, the information a bidder knows about her own valuation for an object (\cite{BM22}) — affects a designer's goals. But in many settings agents naturally possess or can acquire information about other agents as well. Can a social planner strictly benefit from agents having additional information about other agents in the mechanism? We address this question within the canonical framework of Bayesian implementation of \cite{jackson91} (see also \cite{PS86} and \cite{PS89}). Our main finding is that a planner interested in \emph{full implementation} — designing a mechanism such that every equilibrium yields a desirable outcome — can strictly prefer some agents to receive noisy signals about other agents, and we give a sufficient condition under which this can occur.

As an example, consider a firm deciding whether to invest more in a product, maintain the status quo, or exit the market. The board (planner) relies on two analysts who privately observe positively correlated signals about whether demand is high or low. One analyst is more senior than the other, and the board is cautious: they want the firm to exit whenever the senior analyst observes a low signal, invest only when both analysts observe a high signal, and maintain the status quo if the senior analyst believes the demand is high and the junior analyst disagrees. Assuming the analysts care only about monetary payments, a natural way to ensure both analysts' truthful reporting of their signal as an equilibrium is to pay the analysts if and only if they agree. Unfortunately, this setup admits undesirable equilibria, such as analysts coordinating on a high or low demand report, regardless of their private signals. A key theoretical insight from the Bayesian implementation literature is that allowing agents to suggest alternative outcome rules can eliminate undesired equilibria if the agent proposing the alternate rule strictly benefits from the proposal at the undesired equilibrium but is weakly worse off if agents are at the preferred equilibrium. To illustrate our main point, we introduce two more agents — a CEO and a compliance officer — who do not have private information but have preferences over the firm's action. We exhibit a common prior and natural payoff structure under which the compliance officer can propose alternate rules that eliminate the equilibria where the analysts always coordinate on a message, but no agent is able to rule out an equilibrium where the analysts always report the demand is high when they observe a low signal and vice versa.

The CEO wants the firm to capture as large a share of the product’s market as possible, so she prefers investing to the status quo, and exiting is her least-preferred outcome. Suppose we give the CEO a binary signal about the analysts’ private information that achieves the following. In one realization, she assigns higher probability to both analysts seeing a high signal than to both seeing a low signal. In the other realization, she assigns higher probability to the senior analyst observing a high signal and the junior analyst observing a low signal than vice versa. In either realization, she can credibly propose an alternative outcome rule that swaps the firm actions prescribed to the respective pair of analyst signals. Each swap weakens her expected payoff when the analysts are truthful, so she never invokes it in the desired equilibrium. But it strictly improves her payoff when the analysts are engaged in the deceptive equilibrium where they always lie, giving her an incentive to propose the alternate rule and thereby break that equilibrium. 

Motivated by this example, this paper seeks conditions under which a planner would strictly prefer that agents have full-support signals about other agents’ \emph{types}. While the above example provides intuition for why such information can help, our starting result in Section~\ref{sec:partial} is that additional information never strictly benefits a planner interested only in \emph{partial implementation} — merely ensuring truth-telling is an equilibrium. The remainder of the paper examines how additional information about the types of other agents can help an agent signal untruthful equilibria to the planner. In section \ref{sec:Main example}, we formally describe the example from above. We show that the board's desired \emph{social choice function} (SCF) is not fully implementable when agents have no information about the types of other agents but is fully implementable when the CEO is given a noisy signal. The condition that rules out undesired equilibria in Bayesian implementation settings is \emph{Bayesian monotonicity}, which guarantees agents sufficient incentives to signal undesirable equilibria to the planner. Our theoretical analysis focuses heavily on how Bayesian monotonicity varies with the information agents have about other agents. Section \ref{sec:thms} presents a sufficient condition for an SCF that is not fully implementable when agents have no information about the types of other agents to be fully implementable when agents have more information.

An obvious concern with giving an agent additional information about types of other agents is that it can affect their own incentive to report truthfully to the mechanism. Indeed, the conclusion of section \ref{sec:partial} is that incentive compatibility becomes (weakly) harder to satisfy if agents receive additional information about other agents. To avoid eliminating truthful reporting as an equilibrium, we only modify the information available to agents for which truth-telling is a dominant strategy in the \emph{direct mechanism} of a given SCF. Our results do not speak to SCFs for which no agent has a dominant strategy in the direct mechanism.

To establish our main result, we note that a partially implementable SCF is not fully implementable if and only if there is an untruthful equilibrium of the direct mechanism such that no agent has an incentive to report the deception to the planner. The task is then to construct a signal that gives some agent $i$ the ability to eliminate these equilibria. In general, if truth-telling is a dominant strategy for player $i$, an untruthful equilibrium $\alpha$ of the direct mechanism will involve some subset of other agents playing deceptions. The signal we construct partitions the set of other agents' types into two groups. Each group in the partition must contain a type profile of other agents such that if the agent $i$ knew this type profile realized, she would prefer all other agents to report truthfully rather than deceive according to $\alpha$, regardless of her own type. If agent $i$ is confident enough that the realized type profile is in one group or the other, she can propose an alternate SCF that swaps the outcomes implemented under the truthful strategy versus the deception. By construction, this alternate SCF can only give agent $i$ a weakly lower payoff relative to the original if all other agents report truthfully, so agent $i$ would only propose it if $\alpha$ is being played. Whenever the signal is sufficiently accurate, agent $i$ can eliminate $\alpha$ as an equilibrium. We conclude by showing that this procedure can iteratively eliminate all untruthful equilibria of the direct mechanism.

\paragraph{Related Literature} This paper directly contributes to the literature on full Bayesian implementation — which began with \cite{PS86}, \cite{PS89}, and \cite{jackson91} — by shedding light on the relationship between Bayesian monotonicity and the information available to agents. Since agents are receiving informative signals about the state of the world, we also add to the extensive literature on the effects of changing information structures in mechanism design settings pioneered by \cite{BV02}. It  relates especially to the literature on information disclosure in mechanism design. In auction settings, \cite{ganuza04} studies the optimal way for a seller to disclose a costly public signal about a good's characteristic in a second-price auction, which is a component of bidders' private valuations, while \cite{BP07} investigate the joint design problem in which the seller controls the information bidders have about their own valuations and the auction rules. Other papers that have investigated the effects of information disclosure in auction settings include \cite{ES07}, and \cite{BM22}. Similar questions have also been investigated in screening settings by \cite{LS17} and \cite{roesler_buyer-optimal_2017}. Additionally, while our paper asks whether a planner can be interested in agents' information about other agents, one can also ask about an agent's incentives to acquire such information, as done in \cite{GP23}, who show that most mechanisms incentivize agents to acquire information about other agents in environments where agents have uncertainty about their own types as well as types of other agents. Finally, our emphasis on a designer's preferences over agent information structures connects with the recent literature on information design spurred by \cite{rayo2010optimal} and \cite{KamenicaGentzkow11}. Consistent with the information design literature, our analysis shows that a designer may prefer information structures strictly between no information and complete information.

\section{Model \label{sec:model}}
There are $n\geq 3$ agents $I=\{1,\dots,n\}$, a finite set of \emph{outcomes} $A$, and a finite number of \emph{payoff types} (henceforth types) $\theta_i\in \Theta_i$ for each agent $i.$ The state of the world $\theta\equiv(\theta_1,\dots,\theta_n)$ is in the set $\Theta\equiv\Theta_1\times\cdots\times \Theta_n.$ For any agent $i$, the type profile of agents other than $i$ is denoted $\theta_{-i}\in\Theta_{-i}\equiv\prod_{j\neq i}\Theta_j.$ There is a full-support \emph{common prior} $\mu$ over $\Theta$, and each agent knows her own type $\theta_i$ but cannot fully determine $\theta_{-i}.$ Each agent $i\in I$ has a utility function that depends only on $\Theta_i$, so $u_i:A\times \Theta_i\to \R$. Thus, for every type $\theta_i\in\Theta_i$, agent $i$ has a set of preferences over the outcomes in $A.$ This is the case of \emph{private values}.

The \emph{planner} wishes to implement a deterministic \emph{social choice function} (SCF) $F:\Theta\to A.$ The planner commits to a mechanism $\Gamma=\langle M,h\rangle$, such that $M=\prod_{i=1}^nM_i$ consists of a message space $M_i$ for each agent $i$ and $h:M\to A$ is the outcome function. A strategy profile $s=(s_1,\dots,s_n)$ specifies for each agent $i$ a function $s_i:\Theta_i\to M_i$.\footnote{The restriction to pure strategies is only for expositional simplicity and is discussed in footnote \ref{footnote:mixed}.} At any state $\theta$, let $s(\theta)=(s_1(\theta_1),\dots,s(\theta_n))$ be the message profile sent by all agents, and $s_{-i}(\theta_{-i})=(s_j(\theta_j))_{j\neq i}$ denote the messages sent by agents other than $i.$ Let $B^\Gamma$ be the set of Bayesian Nash equilibria for a mechanism $\Gamma.$ If there is an $s\in B^\Gamma$ such that $h\circ s=F$ then $\Gamma$ \emph{partially implements} $F$, and $F$ is considered \emph{partially implementable}. If $\Gamma$ partially implements $F$, and $B^\Gamma$ contains only one equilibrium outcome, then $\Gamma$ \emph{fully implements} $F$ and $F$ is considered \emph{fully implementable}.\footnote{Standard implementation theory allows the planner to specify a \emph{social choice set} (SCS). Full implementation of an SCS is equivalent to full implementability of each SCF in the set under a full-support prior.} We assume the environment is \emph{economic}: For any state $\theta\in\Theta$ and any SCF $F$, at least two agents would prefer to alter the outcome prescribed by $F.$

Classical mechanism design results pin down two key conditions for the implementability of an SCF $F$.

\begin{defn}
    An SCF $F$ is (Bayesian) \textbf{incentive-compatible} if and only if for all agents $i$ and all $\theta_i,\theta_i'\in\Theta_i,$ \[\mathbb{E}(u_i(F(\theta_i,\theta_{-i}),\theta_i)\mid \theta_i) \geq \mathbb{E}(u_i(F(\theta_i',\theta_{-i}),\theta_i)\mid \theta_i),\] where the expectation is taken over the belief of type $\theta_i$ about $\Theta_{-i},$ which is formed from the common prior $\mu$. 
\end{defn}

From the revelation principle we know that an SCF $F$ is partially implementable iff $F$ is incentive-compatible. For an incentive-compatible $F$ to be fully implementable, it must also satisfy \emph{Bayesian monotonicity}, as studied by \cite{PS86, PS89, jackson91}.

\begin{defn}
    A \textbf{deception for agent} $i$ is a function $\alpha_i:\Theta_i\to \Theta_i.$ A \emph{deception} $\alpha=(\alpha_1,\dots,\alpha_n)$ consists of a deception $\alpha_i$ for each agent $i\in I.$ 
\end{defn}

\begin{defn}
    Let $\alpha$ be a deception and $$\alpha(\theta)=(\alpha_1(\theta_1),\dots,\alpha_n(\theta_n))$$ and $$\alpha_{-i}(\theta_{-i}) = (\alpha_1(\theta_1),\dots,\alpha_{i-1}(\theta_{i-1}),\alpha_{i+1}(\theta_{i+1}), 
    \dots,\alpha_n(\theta_n)).$$ An SCF $F$ satisfies \textbf{Bayesian monotonicity} if for all deceptions $\alpha$ such that $F\circ\alpha\neq F$ there is an agent $i$ and a function $y:\Theta_{-i}\to A$ such that \[\mathbb E(u_i(F(\theta_{-i},\theta_i),\theta_i)\mid \theta_i)\geq \mathbb E(u_i(y(\theta_{-i}),\theta_i)\mid\theta_i)\] for all $\theta_i\in \Theta_i$ and \[\mathbb E(u_i(F(\alpha(\theta_{-i},\theta_i')),\theta_i')\mid\theta_i')< \mathbb E(u_i(y(\alpha_{-i}(\theta_{-i})),\theta_i')\mid\theta_i')\] for some $\theta_i'\in\Theta_i.$
\end{defn}

It was shown in \cite{jackson91} that incentive compatibility and Bayesian monotonicity are necessary and sufficient for an SCF $F$ to be fully implementable in economic environments with at least 3 agents.\footnote{\cite{jackson91} identifies a closure condition necessary for the implementation of SCSs, but the condition is vacuous for the case of a single-valued SCF.}\footnote{\cite{jackson91} characterizes full implementation in pure strategies, i.e. every pure-strategy equilibrium of the mechanism is desirable. If agents can use mixed strategies, full implementation in our setting should demand that every realization of the strategy profile yields an outcome that aligns with $F.$ \cite{SV2010} identify a mixed Bayesian monotonicity condition that together with incentive compatibility is necessary and sufficient for full implementation of an SCF in economic environments when agents can use mixed strategies. The condition is equivalent to Bayesian monotonicity in the case of a single-valued SCF, implying that all our results hold if we allow agents to use mixed strategies.\label{footnote:mixed}} While incentive compatibility guarantees that truth-telling is \emph{an} equilibrium of the direct mechanism, Bayesian monotonicity is the condition that allows the planner to design a mechanism that can eliminate undesired equilibria. The intuition behind the condition, which we will reference throughout this paper, is the following. Suppose agents are initially at an equilibrium that uses a deception $\alpha$ such that $F\circ\alpha\neq F.$ The planner can enlarge the message space of each agent beyond her own type space to allow any individual agent $i$ to specify an alternate SCF. (The function $y$ in the definition of Bayesian monotonicity can be extended to $\Theta$ by implementing the same outcome regardless of agent $i$'s type.) The planner can commit to implementing the alternate SCF only if it appears to weakly harm agent $i$, regardless of her type, if the agents were playing the truthful equilibrium. Bayesian monotonicity gives at least one type of agent $i$ an incentive to specify such an alternate SCF because she would prefer the alternate SCF to $F$ under the deception $\alpha$, ruling out the deception as an equilibrium of a larger mechanism.

The beliefs agents have regarding the types of other agents is an important component of both incentive compatibility and Bayesian monotonicity. We study how these conditions change when agents have additional information about the types of other agents. 

Suppose each agent $i$ can access a signal \[\sigma_i:\Theta_{-i}\to \Delta(\zeta_{i}),\] where $\zeta_i$ is a finite set of signal realizations and $\sigma_i(\theta_{-i})$ is the signal observed by agent $i$ when the type profile of other agents is $\theta_{-i}.$ Assume that $\sigma_i$ has full support over $\zeta_{i}$ for every agent $i.$ After observing the realization of $\sigma_i$ and her type $\theta_i$, agent $i$ updates her beliefs regarding $\Theta_{-i}$ (and $\Theta$) according to Bayes' rule. In particular, suppose $\zeta_i=\{A_{i1},\dots,A_{ik}\}$ and agent $i$ has type $\theta_i.$ Upon observing $\sigma_i=A_{ij}$, agent $i$ places probability \[\mu(\theta_{-i}\mid\theta_i,\sigma_i=A_{ij})=\frac{\mu(\theta_{-i}\mid\theta_i)\sigma_i(\theta_{-i})(A_{ij})}{\sum_{\theta_{-i}'}{\mu(\theta_{-i}'\mid\theta_i)\sigma_i(\theta_{-i}')(A_{ij})}}\] on type $\theta_{-i}.$ 

We follow the standard timing used in implementation theory with the requirement that the signal realizations are observed sometime prior to when agents choose their strategies in the mechanism so that the informational content of the signal is factored into their strategies.\footnote{Technically one can interpret an agent's signal $\sigma_i$ about other agents as part of agent $i$'s private information, so if $\sigma_i$ has possible realizations in $\zeta_i=\{A_{i1},\dots,A_{ik}\}$ an implementing mechanism should ask each agent $i$ to specify a strategy for each $(\theta_i,A_{ij})$ pair. Our setup is implicitly analyzing an equivalent formulation of this mechanism in which the uncertainty coming from the signals $\{\sigma_i\}$ is resolved first and asking whether $F$ is partially or fully implementable in each resulting subgame of the full mechanism.} When the signals $\{\sigma_i\}_{i\in I}$ provide no information, the model is equivalent to the classical Bayesian implementation problem studied by \cite{jackson91}. The main question we address is whether and when an SCF $F$ that is not partially (resp. fully) implementable when $\sigma_i$ is noninformative for all $i$ can become partially (resp. fully) if one or more of the $\sigma_i$ is nontrivial. We are agnostic on where the signals $\{\sigma_i\}_{i\in I}$ come from. The nature of the question is one of comparative statics on the implementability of an SCF $F$ with respect to an increase in information one more agents $i$ has regarding $\Theta_{-i}$.\footnote{Since agents initially do not have any information about $\Theta_{-i}$ aside from the common prior $\mu$, any nontrivial signal $\sigma_i$ represents an increase in agent $i$'s information with respect to the Blackwell order.} In the remainder of this paper we will say that $F$ is  \emph{partially implementable with respect to} $\{\sigma_i\}_{i\in I}$ if truth-telling is an equilibrium in the direct mechanism in each realization of the signals $\{\sigma_i\}_{i\in I}$. Similarly define \emph{fully implementable with respect to} $\{\sigma_i\}_{i\in I}.$ 

Our setup makes several substantive assumptions. Other than requiring all $\sigma_i$ to have full support, we assume there are no restrictions over the structure of the signal agent $i$ can receive. Thus, agents do not have private information that can constrain the types of signals they can process. Additionally, we assume the signals are not costly in order to focus on the role of the additional information regarding other agents' types. Finally, the structure of the signals $\{\sigma_i\}_{i\in I}$ is assumed to be common knowledge. The restriction that each $\sigma_i$ has full support is avoids complications to the type space $\Theta$ after agents receive their signals. In particular, any time an agent receives a signal that allows her to completely rule out any type profile, her set of possible types strictly enlarges. For example, if agents received a perfect signal about $\theta_{-i}$, each agent's private type is the full state of the world $\theta$ in this model, which reduces the problem to the classical Nash implementation problem solved by \cite{Maskin99}.\footnote{In particular, incentive compatibility is a trivial constraint in the complete information setting.}\footnote{See \cite{bergemann_robust_2005} for a more detailed discussion of how changes to the type space affect implementation of SCFs.}
\section{Partial implementation and the direct mechanism}
\label{sec:partial}
In this section we show that providing additional information to agents about the types of other agents is not helpful for the planner interested only in partial implementation and set up our full implementation analysis. We want to show that if $F$ is not implementable with respect to an uninformative signal structure, then it is not implementable with respect to any signal structure $\{\sigma_i\}_{i\in I}$. For the remainder of this paper, write $\phi_i$ to mean an uninformative signal about $\Theta_{-i}$ for each $i.$

Given any SCF $F$ we can consider the \emph{direct mechanism} defined by $\Gamma_F=\langle M,h\rangle,$ where $M_i=\Theta_i$ for each $i$ and $h=F.$ Each agent $i$ has a \emph{truth-telling} strategy $s:\Theta_i\to\Theta_i$ given by $s(\theta_i)=\theta_i.$ The revelation principle says that $F$ is partially implementable with respect to $\{\phi_i\}$ iff for each agent $i\in I$ and $\theta_i,\theta_i'\in \Theta_i$,

\[\sum_{\theta_{-i}\in \Theta_{-i}} \mu(\theta_{-i}\mid\theta_i)u_i(F(\theta_i, \theta_{-i}),\theta_i)\geq \sum_{\theta_{-i}\in \Theta_{-i}} \mu(\theta_{-i}\mid\theta_i)u_i(F(\theta_i', \theta_{-i}),\theta_i).\tag{i}\label{eq:noinfoIC}\]

If, however, $\sigma_i\in \{A_{i1}, \dots, A_{ik}\}$ contained  information about $\Theta_{-i}$, partial implementation requires $F$ to be incentive-compatible with respect to the beliefs of agent $i$ in every realization of $\sigma_i.$ In other words, for every $A_{i\ell}$ and $\theta_i,\theta_i'\in\Theta_i$ it must be that \[\sum_{\theta_{-i}\in \Theta_{-i}}\mu(\theta_{-i}\mid\theta_i, \sigma_i=A_{i\ell})u_i(F(\theta_i,\theta_{-i}),\theta_i)\geq \sum_{\theta_{-i}\in \Theta_{-i}}\mu(\theta_{-i}\mid\theta_i, \sigma_i=A_{i\ell})u_i(F(\theta_i',\theta_{-i}),\theta_i).\tag{ii}\label{eq:infoIC}\]

The law of total expectations implies that~\eqref{eq:noinfoIC} is just a weighted average of the equations ~\eqref{eq:infoIC} for $1\leq \ell \leq k.$ This implies the following proposition.
\begin{prop}
    Suppose $F$ is an SCF that is partially implementable with respect to some signal structure $\{\sigma_i\}_{i\in I}$. Then $F$ is partially implementable with respect to $\{\phi_i\}_{i\in I}$.
\end{prop}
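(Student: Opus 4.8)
The plan is to prove the contrapositive via the law of total expectations, which the authors have essentially set up in the paragraph preceding the proposition. Concretely, I would assume $F$ is partially implementable with respect to $\{\sigma_i\}_{i\in I}$ and deduce that the no-information incentive compatibility constraint~\eqref{eq:noinfoIC} holds for every agent $i$ and every pair $\theta_i,\theta_i'\in\Theta_i$, which by the revelation principle is exactly what it means for $F$ to be partially implementable with respect to $\{\phi_i\}_{i\in I}$.

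First I would fix an arbitrary agent $i$ and types $\theta_i,\theta_i'\in\Theta_i$. By hypothesis, $F$ is incentive-compatible in every realization of the signal, so~\eqref{eq:infoIC} holds for each signal realization $A_{i\ell}$, $1\leq\ell\leq k$. The key algebraic fact is that the unconditional belief decomposes as a convex combination of the post-signal beliefs: writing $p_\ell\equiv\mathbb{P}(\sigma_i=A_{i\ell}\mid\theta_i)$, Bayes' rule gives
\[\mu(\theta_{-i}\mid\theta_i)=\sum_{\ell=1}^k p_\ell\,\mu(\theta_{-i}\mid\theta_i,\sigma_i=A_{i\ell}),\]
with $\sum_\ell p_\ell=1$ and each $p_\ell\geq 0$. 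Substituting this identity into the left-hand side of~\eqref{eq:noinfoIC} and interchanging the order of summation expresses the unconditional expected utility of reporting $\theta_i$ as $\sum_\ell p_\ell$ times the conditional expected utility in realization $A_{i\ell}$, and similarly for reporting $\theta_i'$. Since~\eqref{eq:infoIC} gives the desired inequality term-by-term and the weights $p_\ell$ are nonnegative, summing the weighted inequalities preserves the direction, yielding~\eqref{eq:noinfoIC}.

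Because $i$ and $\theta_i,\theta_i'$ were arbitrary, inequality~\eqref{eq:noinfoIC} holds for all agents and all type pairs, so $F$ is incentive-compatible with respect to $\{\phi_i\}_{i\in I}$; the revelation principle then gives partial implementability with respect to $\{\phi_i\}_{i\in I}$. I do not anticipate a genuine obstacle here, as the argument is a direct application of the tower property already flagged in the text. The only point requiring a modicum of care is confirming that the full-support assumption on each $\sigma_i$ guarantees $p_\ell>0$ for every realization, so that every conditional belief is well-defined and the decomposition of $\mu(\theta_{-i}\mid\theta_i)$ as a convex combination is legitimate; this ensures no constraint~\eqref{eq:infoIC} is vacuous and the averaging step is fully rigorous.
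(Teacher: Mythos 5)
Your proposal is correct and follows essentially the same route as the paper's own proof: decompose the unconditional incentive constraint~\eqref{eq:noinfoIC} as a convex combination of the conditional constraints~\eqref{eq:infoIC} via the law of iterated expectations and sum the nonnegatively weighted inequalities. If anything you are slightly more careful than the paper, which labels the weights as unconditional probabilities of the signal realizations where the correct weights (as you use) are $\mathbb{P}(\sigma_i=A_{i\ell}\mid\theta_i)$; also note that despite announcing a ``contrapositive'' you in fact give the direct argument, which is fine.
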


In other words, it can only become harder for $F$ to satisfy incentive compatibility if an agent receives an informative signal about other agents' types, compared to the no information case. We wish to avoid ruling out the equilibrium specified by $F$ while making certain agents' information structures more informative. Thus for the remainder of this paper we focus on the information of agents for which truth-telling is a dominant strategy in the direct mechanism.

It is well-known that the direct mechanism generally has undesired equilibria. The mechanism exhibited by \cite{jackson91} to fully implement an SCF $F$ allows each agent to report, among other information, an alternate SCF that can be used to signal a deception to the planner, provided $F$ is Bayesian monotonic. We introduce the following notion to keep track of which agents have an incentive to alert the planner of a given deception.

\begin{defn}
    Let $F$ be an SCF and $\alpha$ be a deception such that $F\circ \alpha \neq F$. Say that $\alpha$ is \textbf{undermined} by agent $i$ (or agent $i$ undermines $\alpha$) if there is a function $y:\Theta_{-i}\to A$ such that \[\mathbb E(u_i(F(\theta_{-i},\theta_i),\theta_i)\mid \theta_i)\geq \mathbb E(u_i(y(\theta_{-i}),\theta_i)\mid\theta_i)\] for all $\theta_i\in \Theta_i$ and \[\mathbb E(u_i(F(\alpha(\theta_{-i},\theta_i')),\theta_i')\mid\theta_i')< \mathbb E(u_i(y(\alpha_{-i}(\theta_{-i})),\theta_i')\mid\theta_i')\] for some $\theta_i'\in\Theta_i.$
\end{defn}

The inequalities are exactly the ones from the definition of Bayesian monotonicity. Since a deception is equivalent to a pure-strategy profile in the direct mechanism, the only deceptions $\alpha$ such that $F\circ\alpha\neq F$ that need to be undermined in order to guarantee the full implementation of $F$ are (untruthful) equilibria of the direct mechanism. For any incentive-compatible SCF $F$, write $\mathcal{A}_F$ to be the set of deceptions $\alpha$ such that $F\circ\alpha\neq F$ and constitute a pure-strategy equilibrium of the direct mechanism when agents do not have additional information about other agents. Let $\mathcal{A}_F^u\subset \mathcal{A}_F$ be the set of deceptions that can be undermined by some agent $i,$ again when agents have no additional information about other agents. Then we have the following.

\begin{lemma}
\label{lem:revelation}
Let $F$ be an incentive-compatible SCF. Then $F$ is fully implementable if and only if $\mathcal{A}_F^u= \mathcal{A}_F$.
\end{lemma}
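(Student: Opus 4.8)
The plan is to prove Lemma~\ref{lem:revelation} by appealing to Jackson's characterization that an incentive-compatible SCF $F$ is fully implementable if and only if $F$ satisfies Bayesian monotonicity, and then recasting Bayesian monotonicity in terms of the notation $\mathcal{A}_F$ and $\mathcal{A}_F^u$ just introduced. The key observation is that the pair of inequalities appearing in the definition of ``undermined'' is verbatim the pair appearing in the definition of Bayesian monotonicity, so ``agent $i$ undermines $\alpha$'' is exactly the statement that the witness function $y$ required by Bayesian monotonicity exists for the deception $\alpha$. Thus the whole proof is a bookkeeping argument that carefully matches quantifiers, with one genuine reduction step: Bayesian monotonicity quantifies over \emph{all} deceptions $\alpha$ with $F\circ\alpha\neq F$, whereas $\mathcal{A}_F$ contains only those deceptions that are additionally equilibria of the direct mechanism, so I must argue that it suffices to undermine the equilibrium deceptions.

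First I would invoke the cited result of \cite{jackson91}: in an economic environment with $n\geq 3$ agents, an incentive-compatible $F$ is fully implementable if and only if it is Bayesian monotonic. So it suffices to show that $F$ is Bayesian monotonic if and only if $\mathcal{A}_F^u=\mathcal{A}_F$. The direction ``$\mathcal{A}_F^u=\mathcal{A}_F$ implies Bayesian monotonic'' requires showing that undermining every \emph{equilibrium} deception suffices to undermine every deception with $F\circ\alpha\neq F$. Here I would use the standard fact, noted in the text just before the lemma, that deceptions correspond to pure-strategy profiles of the direct mechanism and that the only deceptions which must be undermined to obtain full implementation are the untruthful equilibria. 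Concretely, if $\alpha$ is a deception with $F\circ\alpha\neq F$ that is \emph{not} an equilibrium of the direct mechanism, then it is already ruled out as an equilibrium of the larger mechanism by incentive compatibility, since some agent strictly prefers to deviate; hence Bayesian monotonicity's witness is only needed for $\alpha\in\mathcal{A}_F$, and $\mathcal{A}_F^u=\mathcal{A}_F$ provides exactly such a witness for each.

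For the converse, ``Bayesian monotonic implies $\mathcal{A}_F^u=\mathcal{A}_F$,'' I would argue directly: since $\mathcal{A}_F^u\subseteq\mathcal{A}_F$ by definition, it suffices to show $\mathcal{A}_F\subseteq\mathcal{A}_F^u$. Take any $\alpha\in\mathcal{A}_F$; then $F\circ\alpha\neq F$, so Bayesian monotonicity yields an agent $i$ and a function $y:\Theta_{-i}\to A$ satisfying the two inequalities, which is precisely the statement that agent $i$ undermines $\alpha$, i.e.\ $\alpha\in\mathcal{A}_F^u$. Combining both directions with Jackson's theorem completes the proof.

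The main obstacle I anticipate is not computational but conceptual precision: I must be careful that Bayesian monotonicity as defined quantifies over all deceptions changing the outcome, while the lemma concerns only equilibrium deceptions, so the argument that non-equilibrium deceptions need no witness (because incentive compatibility already excludes them from being equilibria of any mechanism whose restriction to direct reports coincides with $\Gamma_F$) must be stated cleanly. The subtle point is that full implementation only demands every equilibrium \emph{outcome} of the implementing mechanism agree with $F$, so deceptions that cannot arise as equilibria impose no additional requirement; making this reduction rigorous is where I would focus the care, while the rest reduces to matching the two inequality displays.
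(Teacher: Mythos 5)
Your overall strategy differs from the paper's: the paper does not use Jackson's sufficiency theorem as a black box, but instead reconstructs the canonical implementing mechanism (Rules 1--3) and verifies directly that undermining only the equilibrium deceptions suffices for that mechanism to have no bad equilibria. You instead propose to quote Jackson's characterization (fully implementable iff Bayesian monotonic, for incentive-compatible $F$) and then prove the equivalence ``Bayesian monotonic $\iff \mathcal{A}_F^u=\mathcal{A}_F$.'' Your converse direction and your ``only if'' direction are fine. The problem is the forward direction of your equivalence.

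To invoke Jackson's sufficiency result as a black box you must verify Bayesian monotonicity \emph{as defined}, i.e.\ exhibit a witness $y$ for \emph{every} deception $\alpha$ with $F\circ\alpha\neq F$, including those that are not equilibria of the direct mechanism. Your justification for skipping these --- that such $\alpha$ ``are already ruled out as equilibria of the larger mechanism by incentive compatibility, hence the witness is only needed for $\alpha\in\mathcal{A}_F$'' --- is a claim about what full implementation requires, not a proof that the hypothesis of the theorem you are citing holds. As written, this either leaves Bayesian monotonicity unverified (so Jackson's theorem cannot be applied) or silently re-opens Jackson's proof to argue that only equilibrium deceptions matter, which is exactly the content the paper supplies by rebuilding the mechanism. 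The gap is fillable along your route: if $\alpha$ with $F\circ\alpha\neq F$ is not an equilibrium of $\Gamma_F$, some type $\theta_i'$ of some agent $i$ has a strictly profitable deviation $\theta_i''$ against $\alpha_{-i}$; setting $y(\theta_{-i})=F(\theta_{-i},\theta_i'')$, incentive compatibility delivers the first (weak) inequality for all $\theta_i$, and the strict profitability of the deviation delivers the second inequality at $\theta_i'$, so every non-equilibrium deception automatically has a witness. You need to include this construction (or an explicit mechanism argument like the paper's); without it the forward direction does not go through.
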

The lemma follows from the mechanism constructed in the full Bayesian implementation result of \cite{jackson91}. It allows us to focus our attention on a small and well-defined set of deceptions rather than the unruly space of all possible deceptions suggested by the definition of Bayesian monotonicity.

\section{An illustrative example \label{sec:Main example}}

The goal of this paper is to provide a sufficient condition for a non-fully-implementable SCF to be fully implementable when agents can receive noisy information about other agents' types. In this section, we illustrate an example that shows how an agent for which truthful reporting is a dominant strategy can use information about other agents to signal deceptions to the planner.

Suppose a firm selling a product is deciding whether to invest more in the product ($I$), maintain the status quo ($S$), or exit the market ($E$). To decide what to do, the board (planner) designs a mechanism between a senior analyst (agent 1), junior analyst (agent 2), the CEO (agent 3), and a compliance officer (agent 4). Each analyst $i$ privately observes a signal in $\{H_i,L_i\}$ regarding whether demand for the product is high ($H_i$) or low ($L_i$), which constitute's agent $i$'s private type $\theta_i$. The CEO and compliance officer have only one type (or equivalently do not have private information), and the analysts' types are correlated according to the common prior distribution given by \[\mu(\theta_1=H_1,\theta_2=H_2)=\mu(\theta_1=L_1,\theta_2=L_2)=\frac13,\] and   \[\mu(\theta_1=H_1,\theta_2=L_2)=\mu(\theta_1=L_1,\theta_2=H_2)=\frac16.\] An SCF maps the pair of analysts’ type to a triple $(y,t_1,t_2)$ consisting of a firm action $b\in \{I, S, E\}$ and non-negative transfers $t_1,t_2$ received by the analysts. The firm has 2 indivisible units of money that can be paid out to analysts, so the outcome set is \[A= \{I,S,E\}\times \{(t_1,t_2)\in \mathbb{N}^2\mid t_1+t_2\leq 2\}.\] All agents have type-independent preferences, so we omit types from the utility functions in this example. Each analyst's utility is equal to his transfer. The CEO's utility function is given by \[u_3(b,t_1,t_2)=9\mathbbm{1}(b=I)+6\mathbbm{1}(b=S)+3\mathbbm{1}(b=E)-t_1-t_2,\] reflecting that she always prefers investing to keeping status quo to exiting, and within each action prefers to give spend less money on the analysts. Finally, the compliance officer's utility is \[u_4(b,t_1,t_2)=2\mathbbm{1}(b=S)-\mathbbm{1}(b=I),\] reflecting a preference for keeping status quo to exiting to investing. The board’s objective is given by the SCF
\[
F(\theta_1,\theta_2)=
\begin{cases}
(I,1,1) &\text{if }(\theta_1,\theta_2)=(H_1,H_2),\\[4pt]
(S,0,0) &\text{if }(\theta_1,\theta_2)=(H_1,L_2),\\[4pt]
(E,0,0) &\text{if }(\theta_1,\theta_2)=(L_1,H_2),\\[4pt]
(E,1,1) &\text{if }(\theta_1,\theta_2)=(L_1,L_2).
\end{cases}
\]
In other words, the board wants to pay the analysts \((1,1)\) only when they deliver the same recommendation. The firm values the senior analyst's opinion more and is risk-averse, exiting whenever the senior analyst reports low demand and maintains the status quo if the senior analyst reports high demand but the junior analyst disagrees. The board only wants to invest if both analysts agree the demand is high. In the remainder of this section, we show that $F$ is not fully implementable under an uninformative signal structure and construct a signal that can be provided to the CEO about the analysts' that allows the board to fully implement $F.$

First consider the implementability of $F$ without additional signals. As noted in Lemma \ref{lem:revelation}, $F$ is fully implementable if and only if every equilibrium $\alpha$ of the direct mechanism such that $F\circ\alpha\neq F$ can be undermined by some agent. The direct mechanism, which allows only the analysts to have nontrivial strategy spaces in this case, has exactly four pure-strategy Bayes Nash equilibria. Note that neither the CEO nor the compliance officer can affect the outcome in the direct mechanism, since they have only one type and truth-telling is trivially their dominant strategy.

\begin{enumerate}
    \item \emph{Truth-telling}: Analysts report truthfully, so $s_i(H_i)=H_i$ and $s_i(L_i)=L_i$ for $i=1,2.$
    \item \emph{Always-$H$}: $s_i(H_i)=s_i(L_i)=H_i$.
    \item \emph{Always-$L$}: $s_i(H_i)=s_i(L_i)=L_i$.
    \item \emph{Always-lie}: $s_i(H_i)=L_i$ and $ s_i(L_i)=H_i$.
\end{enumerate}

\begin{prop}
The compliance officer can undermine both the always-$H$ and always-$L$ equilibria, but no agent can undermine the always-lie equilibrium.
\end{prop}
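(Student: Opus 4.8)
The plan is to prove the proposition in two parts: first showing the compliance officer (agent 4) undermines the always-$H$ and always-$L$ equilibria, then showing no agent can undermine the always-lie equilibrium. For the first part, I would directly exhibit a suitable alternate rule $y:\Theta_{-4}\to A$ for agent 4. Since agent 4 has only one type, the incentive-compatibility side of the undermining definition (the weak inequality for all $\theta_4$) reduces to a single inequality: I need $\mathbb{E}(u_4(F(\theta_{-4}))) \geq \mathbb{E}(u_4(y(\theta_{-4})))$ under the prior, while the strict inequality must hold under the deception. Because agent 4 prefers $S$ to $E$ to $I$, the natural construction is to let $y$ differ from $F$ only by relabeling the firm action on the type profiles that the deception $\alpha$ collapses, steering toward the status quo $S$. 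For always-$H$, both analyst signals are reported as $(H_1,H_2)$, so $F\circ\alpha$ always yields $I$; I would choose $y$ to output $S$ (agent 4's favorite) on the image of $\alpha$, which raises agent 4's payoff under the deception but, because $F$ already sometimes prescribes $S$ or $E$ truthfully, can be arranged to not raise (and ideally to lower) her expected payoff when analysts are truthful. The always-$L$ case is symmetric, with the deception collapsing reports onto $(L_1,L_2)$ where $F$ yields $(E,1,1)$; again routing $y$ toward $S$ gives agent 4 a strict gain under the deception. The bulk of this part is the routine verification of the two inequalities with the explicit prior weights $\tfrac13,\tfrac16$.

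For the second part — that no agent undermines always-lie — I would argue agent by agent. Under always-lie, $\alpha$ swaps $H_i\leftrightarrow L_i$ for each analyst, so the induced outcomes are $F\circ\alpha(H_1,H_2)=F(L_1,L_2)=(E,1,1)$, $F\circ\alpha(L_1,L_2)=(I,1,1)$, and the two mixed profiles swap $S$ and $E$. The key observation is that always-lie is a bijective deception: it merely permutes the type profiles, inducing a symmetry on the prior (indeed $\mu$ is symmetric under the joint swap, since $\mu(H_1,H_2)=\mu(L_1,L_2)=\tfrac13$ and $\mu(H_1,L_2)=\mu(L_1,H_2)=\tfrac16$). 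This symmetry is what I expect makes undermining impossible, and it is the crux of the argument. For each agent $i$ I must show that any $y$ satisfying the weak inequality under truth-telling fails the strict inequality under $\alpha$.

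I would handle the analysts and the non-analysts separately. For an analyst, the undermining inequalities unwind via the definition of $y$ composed with $\alpha_{-i}$; I would show that the prior symmetry combined with each analyst's transfer-only preferences forces the truth-telling inequality and the deception inequality to coincide, so the strict inequality cannot hold. For the CEO (agent 3), the essential point is that the always-lie deception has no effect on the \emph{marginal} distribution over firm actions paired with total transfers that the CEO cares about: because $\mu$ is symmetric under the swap and $\alpha$ is a bijection, the distribution of $u_3$ values under $F\circ\alpha$ matches that under $F$, so any $y$ that weakly hurts the CEO under truth-telling cannot strictly help her under the deception. The compliance officer (agent 4) is analogous, using that her utility depends only on the firm action and the prior symmetry equates the action distributions under $F$ and $F\circ\alpha$. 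The main obstacle is precisely this symmetry argument for the non-analysts: I must verify carefully that the bijective structure of always-lie together with the symmetric prior makes the two expectations defining undermining identical for every candidate $y$, so that a weak inequality on one side forecloses a strict inequality on the other — ruling out undermining for all four agents simultaneously.
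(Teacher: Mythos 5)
Your proposal follows essentially the same route as the paper: an explicit status-quo-favoring alternate rule proposed by the compliance officer kills always-$H$ and always-$L$, and always-lie is handled agent by agent using the fact that the deception is a prior-preserving permutation of type profiles, which makes each agent's weak (truth-telling) inequalities contradict the required strict (deception) inequality. One caveat on your first part: a $y$ that differs from $F$ \emph{only} on the collapsed profile $(H_1,H_2)$ (replacing $I$ with $S$) would strictly \emph{raise} the compliance officer's truth-telling payoff and violate the weak inequality, so, as you anticipate, $y$ must also be degraded off that profile --- the paper's choice is $y(H_1,H_2)=(S,0,0)$ and $y=(I,1,1)$ everywhere else, which makes her exactly indifferent under truth-telling and strictly better off under always-$H$.
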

\begin{proof}
Recall that an agent can undermine a deception (undesired equilibrium of the direct mechanism) if she can propose an alternate SCF that weakly harms her payoff, regardless of her own type, under truth-telling but strictly benefits her payoff under the deception. If the analysts are playing the always-$H$ deception, then the compliance officer can propose the following alternate SCF: \[
y(\theta_1,\theta_2)=
\begin{cases}
(S,0,0) &\text{if }(\theta_1,\theta_2)=(H_1,H_2),\\[4pt]
(I,1,1) &\text{otherwise}.
\end{cases}
\]
It is readily verified that $y$ gives the compliance officer the same expected utility as $F$ if the analysts were truthfully reporting but strictly benefits her if the analysts are playing always-$H$, implying that the compliance officer can undermine the always-$H$ deception. A similar argument shows the compliance officer can undermine the always-$L$ deception. To show $F$ is not fully implementable, it remains to show no agent can undermine the always-lie deception. To see this, first consider the senior analyst's perspective. Any alternate SCF $y$ he can propose must only depend on $\theta_2$, so let $y(H_2)=a_1$ and $y(L_2)=a_2.$ If the senior analyst can undermine always-lie, $y$ must weakly harm his utility under truth-telling, so $a_1,a_2$ must be chosen such that \[\frac23 u_1(a_1)+\frac13u_1(a_2)\leq \frac23\] and \[\frac13 u_1(a_1)+\frac23u_1(a_2)\leq \frac23.\] But for the senior analyst to have an incentive to propose $y$ if the analysts are playing always-lie, $y$ must satisfy either \[\frac23 u_1(a_1)+\frac13u_1(a_2)> \frac23\] or \[\frac13 u_1(a_1)+\frac23u_1(a_2)> \frac23,\] contradicting the previous requirement. Thus the senior analyst cannot undermine the always-lie deception. A similar argument applies to the junior analyst. Finally, both the CEO and compliance officer are always indifferent between truth-telling and always-lie under any SCF, so neither agent can undermine always-lie.\footnote{This indifference is not necessary in this example or in general but simplifies the exposition.}
\end{proof}
Therefore, $F$ is not fully implementable under an uninformative signal structure.
\begin{prop}
$F$ is fully implementable with respect to $\{\sigma_i\}$, where $\sigma_1,\sigma_2,$ and $\sigma_4$ are uninformative and $\sigma_3$ has support in $\{0,1\}$ and satisfies $$Pr(\sigma_3=0\mid \theta_{-3}=(H_1,H_2))=Pr(\sigma_3=0\mid \theta_{-3}=(L_1,H_2))=\tau$$ and $$Pr(\sigma_3=1\mid \theta_{-3}=(H_1,L_2))=Pr(\sigma_3=1\mid \theta_{-3}=(L_1,L_2))=\tau$$ for any $\tau>1/2$.
\end{prop}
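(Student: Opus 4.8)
The plan is to show that the CEO (agent 3), upon receiving the signal $\sigma_3$, can undermine the always-lie deception in at least one signal realization, while verifying that truth-telling remains an equilibrium in both realizations (so that partial implementability is preserved) and that the always-$H$ and always-$L$ deceptions remain undermined (by the compliance officer, whose situation is unchanged). By Lemma~\ref{lem:revelation} applied to each subgame induced by a signal realization, full implementability with respect to $\{\sigma_i\}$ follows once every deception in $\mathcal{A}_F$ is undermined in every realization. The only deception that was not already undermined under $\{\phi_i\}$ is always-lie, so the crux is to use the CEO's signal to undermine it.

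First I would compute the CEO's posterior over $\theta_{-3}=(\theta_1,\theta_2)$ in each realization of $\sigma_3$ using Bayes' rule with the given prior $\mu$ (which assigns $1/3$ to each of $(H_1,H_2),(L_1,L_2)$ and $1/6$ to each of $(H_1,L_2),(L_1,H_2)$) and the stated signal probabilities. The key qualitative feature, matching the introduction, is that when $\sigma_3=0$ the CEO assigns higher posterior probability to $(H_1,H_2)$ than to $(L_1,L_2)$, and when $\sigma_3=1$ she assigns higher posterior probability to $(H_1,L_2)$ than to $(L_1,H_2)$; this holds precisely because $\tau>1/2$. I would then exhibit, in the realization $\sigma_3=0$, the alternate SCF $y$ that swaps the outcomes $F$ assigns to $(H_1,H_2)$ and $(L_1,L_2)$ — namely $y$ sets the firm action to $E$ with transfers $(1,1)$ on the type profile $(H_1,H_2)$ and to $I$ with transfers $(1,1)$ on $(L_1,L_2)$, leaving the other two profiles as under $F$ — and symmetrically in the realization $\sigma_3=1$ swap the outcomes on $(H_1,L_2)$ and $(L_1,H_2)$. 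Because transfers cancel and only the firm action enters $u_3$ with $u_3(I)>u_3(E)$, each swap lowers the CEO's expected payoff under truth-telling exactly when she is more confident the ``$I$-outcome'' profile is realized, and raises it under always-lie (which sends the argument of $y$ to the swapped profile), precisely when the corresponding posterior inequality points the right way.

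The two undermining inequalities I must verify for the realization $\sigma_3=0$ are: (weak harm under truth-telling) $\mathbb{E}(u_3(F(\theta_{-3}))\mid\sigma_3{=}0)\ge \mathbb{E}(u_3(y(\theta_{-3}))\mid\sigma_3{=}0)$, which reduces to $P(H_1H_2\mid 0)\cdot u_3(I)+P(L_1L_2\mid 0)\cdot u_3(E)\ge P(H_1H_2\mid 0)\cdot u_3(E)+P(L_1L_2\mid 0)\cdot u_3(I)$, i.e. $P(H_1H_2\mid 0)\ge P(L_1L_2\mid 0)$; and (strict benefit under always-lie) the reversed strict inequality with $\alpha_{-3}$ swapping $H\leftrightarrow L$ for both analysts, which after relabeling becomes $P(L_1L_2\mid 0)<P(H_1H_2\mid 0)$ for the terms that move — so both conditions collapse to the single strict posterior inequality $P(H_1H_2\mid\sigma_3{=}0)>P(L_1L_2\mid\sigma_3{=}0)$, guaranteed by $\tau>1/2$. (The CEO's indifference to the two unswapped profiles makes the weak inequality hold with the swapped pair as the only active terms; the two-unit budget ensures $y\in A$.) I expect the main obstacle to be bookkeeping: confirming that under the always-lie deception the argument fed into $y$ is indeed the swapped profile, so that the strict inequality has the correct orientation, and confirming that the weak-harm inequality genuinely holds for \emph{all} of the CEO's (single) type rather than only in expectation. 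Once this is checked for $\sigma_3=0$, the realization $\sigma_3=1$ is symmetric, and I would close by invoking Lemma~\ref{lem:revelation} realization-by-realization to conclude full implementability with respect to $\{\sigma_i\}$.
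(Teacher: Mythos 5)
Your proposal is correct and follows essentially the same route as the paper's proof: the same swap-based alternate SCF $y$ in each signal realization, the same reduction of both undermining inequalities to the posterior comparison $P(H_1,H_2\mid\sigma_3=0)>P(L_1,L_2\mid\sigma_3=0)$ (and its analogue for $\sigma_3=1$) guaranteed by $\tau>1/2$, and the same observation that the compliance officer still undermines always-$H$ and always-$L$ since only the CEO's information changed. The only slip is the phrase ``in at least one signal realization'' in your opening sentence — undermining must hold in \emph{every} realization for full implementability with respect to $\{\sigma_i\}$ — but your actual argument correctly treats both realizations.
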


The signal $\sigma_3$ essentially provides a noisy signal to the CEO regarding whether the junior analyst believes the demand is high or low. This has many possible interpretations, such as allowing the CEO to monitor the junior analyst's work more closely.
\begin{proof}
Suppose $\sigma_3=0$ is realized. Then the CEO believes it is more likely that the type profile of the analysts belongs to $\{(H_1,H_2),(L_1,H_2)\}$ rather than $\{(H_1,L_2),(L_1,L_2)\}$. Since $\tau>1/2,$ she believes it is more likely for the analysts' type profile to be $(H_1,H_2)$ than $(L_1,L_2).$ Suppose she proposes the alternate SCF  \[
y(\theta_1,\theta_2)=
\begin{cases}
(E,1,1) &\text{if }(\theta_1,\theta_2)=(H_1,H_2),\\[4pt]
(S,0,0) &\text{if }(\theta_1,\theta_2)=(H_1,L_2),\\[4pt]
(E,0,0) &\text{if }(\theta_1,\theta_2)=(L_1,H_2),\\[4pt]
(I,1,1) &\text{if }(\theta_1,\theta_2)=(L_1,L_2).
\end{cases}\] which swaps the outcomes corresponding to  $(H_1,H_2)$ and $(L_1,L_2).$ Compared to $F$, the SCF $y$ gives her a lower utility if analysts are truth-telling but improves her utility if analysts are always lying. Thus the CEO can undermine always-lie if $\sigma_3=0$ is realized. Similarly, if $\sigma_3=1$ is realized, the CEO believes $(H_1,L_2)$ is a more likely analyst profile than $(L_1,H_2)$. A similar alternate $y$ swapping the outcomes corresponding to the two profiles allows her to undermine always-lie. Therefore the CEO can undermine always-lie in either realization of $\sigma_3.$ Since $\sigma_3$ only changes the CEO's information, the deceptions always-$H$ and always-$L$ can still be undermined by the compliance officer in either realization. It follows that $F$ is fully implementable with respect to $\{\sigma_i\}.$
\end{proof}

This example illustrates how an agent for which truth-telling is a dominant strategy can undermine deceptions with additional information about the types of other agents. When an SCF $F$ fails to be fully implementable, the culprit is at least one equilibrium $\alpha$ of the direct mechanism that the planner dislikes and cannot be undermined. Suppose truth-telling is a dominant strategy for agent $i$, and she always prefers the outcome associated with the type profile $\theta_{-i}'$ to the type profiles in $\alpha_{-i}(\theta_{-i}')$ — regardless of her own type and what she reports. Then a simple ``outcome" swap can kill $\alpha$ whenever she believes $\theta_{-i}'$ is more likely to be the true type profile of other agents than the type profiles in $\alpha_{-i}^{-1}(\theta_{-i}')$. A suitably constructed signal can ensure she holds such a belief. Additionally, as long as agent $i$ can receive a signal that ensures such a type profile $\theta_{-i}'$ exists in each realization, then an undesirable equilibrium $\alpha$ can be killed.
\section{A sufficient condition}
\label{sec:thms}

We begin by formalizing the intuition provided in the previous section into the following condition.
\begin{defn}
\label{defn:potentially}
Let $F$ be an SCF and $\alpha\in\mathcal{A}_F$. Agent $i$ can \textbf{potentially undermine} $\alpha$ if truth-telling is a dominant strategy for agent $i$ and there exist distinct type profiles $\theta_{-i}',\theta_{-i}''$ such that $\alpha_{-i}(\alpha_{-i}(\theta_{-i}'))\neq \theta_{-i}''$, $\alpha_{-i}(\alpha_{-i}(\theta_{-i}''))\neq \theta_{-i}'$, $\alpha_{-i}(\theta_{-i}')\neq \alpha_{-i}(\theta_{-i}'')$ and the inequalities $$u_i(F(\theta_{-i}',\theta_i'),\theta_i')\geq u_i(F(\alpha_{-i}(\theta_{-i}'),\theta_i'),\theta_i'),$$ $$u_i(F(\theta_{-i}'',\theta_i'),\theta_i')\geq u_i(F(\alpha_{-i}(\theta_{-i}''),\theta_i'),\theta_i')$$ hold for all $\theta_i'\in\Theta_i.$ Furthermore, for each inequality there is a type $\theta_i$ that makes it strict. 
\end{defn}

In words, the definition requires two type profiles of other agents that agent $i$ prefers to be truthfully reported rather than manipulated by $\alpha$ in all states of the world, and for each type profile this preference is strict for at least one of agent $i$'s types. We note that the preferences do not need to be strict for the \emph{same} type $\theta_i$. The requirements that \[\alpha_{-i}(\alpha_{-i}(\theta_{-i}'))\neq \theta_{-i}'', \alpha_{-i}(\alpha_{-i}(\theta_{-i}''))\neq \theta_{-i}', \text{ and }\alpha_{-i}(\theta_{-i}')\neq \alpha_{-i}(\theta_{-i}'')\] limit the interactions $\theta_{-i}'$ and $\theta_{-i}''$ have with each other in the deception $\alpha$, which allows for a natural partition of $\Theta_{-i}$. As the name suggests, an agent $i$ that can potentially undermine a deception $\alpha$ can undermine $\alpha$ with additional information about the types of other agents. 


\begin{lemma}
\label{lem:main}
Let $F$ be an incentive-compatible SCF $\alpha$ be a deception that agent $i$ can potentially undermine. Then there is a binary signal $\sigma_i:\Theta_{-i}\to\Delta(\{0,1\})$ such that agent $i$ can undermine $\alpha$ after receiving either realization of $\sigma_i.$
\end{lemma}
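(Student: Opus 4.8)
The plan is to construct $\sigma_i$ explicitly from the two type profiles $\theta_{-i}',\theta_{-i}''$ guaranteed by the definition of potential undermining, and to verify that in each realization agent $i$ can propose an alternate SCF that witnesses undermining. First I would use the three non-degeneracy conditions — $\alpha_{-i}(\alpha_{-i}(\theta_{-i}'))\neq\theta_{-i}''$, $\alpha_{-i}(\alpha_{-i}(\theta_{-i}''))\neq\theta_{-i}'$, and $\alpha_{-i}(\theta_{-i}')\neq\alpha_{-i}(\theta_{-i}'')$ — to partition $\Theta_{-i}$ into two groups, one containing $\theta_{-i}'$ and $\alpha_{-i}(\theta_{-i}'')$, the other containing $\theta_{-i}''$ and $\alpha_{-i}(\theta_{-i}')$; these conditions are exactly what guarantee these four points can be split so that each group holds one ``truthful'' profile together with the other group's ``deceived'' image. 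I would then define the signal so that realization $0$ places arbitrarily high posterior weight (governed by an accuracy parameter $\tau\to 1$, as in the worked example) on $\theta_{-i}'$ relative to $\alpha_{-i}(\theta_{-i}')$, and symmetrically realization $1$ concentrates on $\theta_{-i}''$ relative to $\alpha_{-i}(\theta_{-i}'')$, while retaining full support on all of $\Theta_{-i}$.

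Next, in the realization $\sigma_i=0$ I would take $y$ to be the alternate SCF that swaps the $F$-outcomes assigned to $\theta_{-i}'$ and $\alpha_{-i}(\theta_{-i}')$ (leaving $F$ unchanged elsewhere), exactly mirroring the outcome swap in the illustrative example. The weak-harm-under-truth-telling inequality from the definition of undermining must hold for \emph{every} type $\theta_i$, and this follows because potential undermining supplies $u_i(F(\theta_{-i}',\theta_i),\theta_i)\geq u_i(F(\alpha_{-i}(\theta_{-i}'),\theta_i),\theta_i)$ for all $\theta_i$: swapping these two outcomes can only (weakly) lower agent $i$'s payoff at each type, and since truth-telling is her dominant strategy I need not worry about her own reporting. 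For the strict-benefit-under-$\alpha$ inequality I would evaluate the expectation $\mathbb{E}(u_i(y(\alpha_{-i}(\theta_{-i})),\theta_i')\mid\theta_i',\sigma_i=0)$ against $\mathbb{E}(u_i(F(\alpha(\theta_{-i},\theta_i')),\theta_i')\mid\theta_i')$ at a type $\theta_i'$ for which the first inequality is strict, and argue that as $\tau\to1$ the posterior concentrates enough mass on the swap gain that the strict inequality survives.

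The symmetric argument with $y$ swapping the $F$-outcomes at $\theta_{-i}''$ and $\alpha_{-i}(\theta_{-i}'')$ handles $\sigma_i=1$. The main obstacle I anticipate is the strict-benefit step: the posterior is taken over \emph{all} of $\Theta_{-i}$, so the swap also perturbs agent $i$'s payoff at type profiles other than the targeted pair, and I must ensure these residual terms do not swamp the gain. The cleanest route is to show the two components of the posterior belief — the conditional weight on $\theta_{-i}'$ versus on $\alpha_{-i}(\theta_{-i}')$ — can be driven arbitrarily far apart by choosing $\tau$ close to $1$, so that the strict gain at the targeted profile dominates any bounded contribution from the rest; the full-support requirement is preserved throughout because each conditional probability stays strictly positive for finite $\tau<1$. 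I would also double-check that the two non-degeneracy conditions on double images are precisely what prevents the swap at one pair from interfering with the belief comparison at the other, so that the partition into $\{0,1\}$ is well defined and each realization yields a genuine undermining.
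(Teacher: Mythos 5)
Your construction is the same as the paper's: the same two-group partition of $\Theta_{-i}$ (one cell containing $\theta_{-i}'$ and $\alpha_{-i}(\theta_{-i}'')$, the other containing $\theta_{-i}''$ and $\alpha_{-i}(\theta_{-i}')$), the same binary signal with accuracy $\tau$ close to $1$, and the same outcome-swapping alternate SCF $y$ in each realization. The strict-benefit step is also argued the way the paper argues it, by letting the posterior gap between the two cells dominate the bounded residual terms.

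There is, however, one genuine misstep in your treatment of the weak-harm inequality. You assert that it ``follows because potential undermining supplies $u_i(F(\theta_{-i}',\theta_i),\theta_i)\geq u_i(F(\alpha_{-i}(\theta_{-i}'),\theta_i),\theta_i)$ for all $\theta_i$: swapping these two outcomes can only (weakly) lower agent $i$'s payoff at each type.'' That is not true pointwise: the swap lowers her payoff conditional on $\theta_{-i}=\theta_{-i}'$ but \emph{raises} it conditional on $\theta_{-i}=\alpha_{-i}(\theta_{-i}')$, since at that profile $y$ delivers the (weakly preferred) outcome $F(\theta_{-i}',\cdot)$ in place of $F(\alpha_{-i}(\theta_{-i}'),\cdot)$. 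The expected difference under truth-telling is proportional to
\[
\bigl(\mu(\theta_{-i}'\mid\theta_i,\sigma_i=0)-\mu(\alpha_{-i}(\theta_{-i}')\mid\theta_i,\sigma_i=0)\bigr)\cdot\bigl(u_i(F(\theta_{-i}',\theta_i),\theta_i)-u_i(F(\alpha_{-i}(\theta_{-i}'),\theta_i),\theta_i)\bigr),
\]
so the weak-harm inequality holds only when the posterior weight on $\theta_{-i}'$ exceeds that on $\alpha_{-i}(\theta_{-i}')$ --- which is exactly why the partition places $\theta_{-i}'$ in the cell favored by $\sigma_i=0$ and $\alpha_{-i}(\theta_{-i}')$ in the other cell, and why the paper imposes the condition $\tau\mu(\theta_{-i}'\mid\theta_i,\cdot)\geq(1-\tau)\mu(\alpha_{-i}(\theta_{-i}')\mid\theta_i,\cdot)$ for $\tau$ large enough, for \emph{every} type $\theta_i$. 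You already have this machinery (you invoke the same weight separation for the strict-benefit step), so the fix is local: apply the $\tau$-dependent comparison to the weak-harm inequality as well, rather than claiming it is automatic.
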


In particular, for $\theta_{-i}',\theta_{-i}''$ as in Definition \ref{defn:potentially}, the signal informs agent $i$, with sufficiently high accuracy, whether or not the type profile of other agents is in the set \[\{\theta_{-i}',\alpha_{-i}(\theta_{-i}'')\}\cup \alpha_{-i}^{-1}(\theta_{-i}').\] We emphasize that the signal does not need to have a \emph{specific} level of accuracy: Any sufficiently accurate signal will accomplish the same objective. For an incentive-compatible $F$ that is not fully implementable, the lemma suggests a systematic way to eliminate all deceptions in $\mathcal{A}_F\setminus \mathcal{A}_F^u$ (which must be non-empty by lemma \ref{lem:revelation}) with additional information. In particular, we have the following result.

\begin{theorem}
\label{thm:main}
Let $F$ be an incentive-compatible SCF. If there is a subset $J\subset I$ such that each $\alpha\in \mathcal{A}_F\setminus \mathcal{A}_F^u$ can be potentially undermined by a distinct agent $j\in J$, and every $\alpha\in \mathcal{A}_F^u$ can be undermined by some agent $i\in I\setminus J$, then $F$ is fully implementable with respect to some signal structure $\{\sigma_i\}_{i\in I}.$ 
\end{theorem}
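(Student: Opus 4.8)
The plan is to construct the signal structure explicitly and then verify full implementability realization by realization using Lemma~\ref{lem:revelation}. First I would fix an injective assignment $\alpha\mapsto j(\alpha)$ sending each deception $\alpha\in\mathcal{A}_F\setminus\mathcal{A}_F^u$ to the distinct agent $j(\alpha)\in J$ that can potentially undermine it; such an assignment exists by hypothesis. For each $j$ in the image, Lemma~\ref{lem:main} supplies a binary signal $\sigma_j$ under which agent $j$ can undermine its assigned deception after either realization; to every remaining agent (those in $J$ with no assigned deception and all of $I\setminus J$) I would hand an uninformative signal $\phi_i$. I will then show $F$ is fully implementable with respect to this $\{\sigma_i\}_{i\in I}$, which by the definition in Section~\ref{sec:model} means $F$ is fully implementable in each realization of the signal profile.

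Next I would record two structural facts. Because truth-telling is a dominant strategy for every agent in $J$, it remains a best response under any posterior, so incentive compatibility — and hence partial implementation, i.e.\ truth-telling as an equilibrium — survives in every realization; agents in $I\setminus J$ retain the prior as their belief, so their incentive constraints are literally unchanged. Because every $\sigma_i$ has full support, each agent's posterior over $\Theta_{-i}$ continues to have full support in every realization.

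The key claim, which I expect to be the main obstacle, is that passing to a realization creates no new undesirable equilibria: every deception $\alpha$ with $F\circ\alpha\neq F$ that is an equilibrium of the direct mechanism in some realization already lies in $\mathcal{A}_F$. To prove it I would take $\alpha\notin\mathcal{A}_F$ with $F\circ\alpha\neq F$, so under the prior some agent has a strictly profitable deviation. If that agent lies in $I\setminus J$, its belief is unchanged and the deviation persists. If it lies in $J$, dominance lets me take the deviation to be truth-telling, which weakly dominates $\alpha_j(\cdot)$ pointwise and is strictly better on some prior-positive-probability profile of $\Theta_{-j}$; full support of the posterior keeps that profile positive-probability, so the deviation stays strictly profitable. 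Either way $\alpha$ fails to be an equilibrium in the realization, giving the claim by contraposition.

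Finally I would undermine every surviving equilibrium. Fix a realization and an equilibrium deception $\alpha$ with $F\circ\alpha\neq F$; by the claim $\alpha\in\mathcal{A}_F$. If $\alpha\in\mathcal{A}_F^u$, the hypothesis produces an agent $i\in I\setminus J$ undermining it under prior beliefs, and since that agent's beliefs are unchanged the same alternate SCF $y$ undermines $\alpha$ in the realization. If $\alpha\in\mathcal{A}_F\setminus\mathcal{A}_F^u$, it is handled by its assigned agent $j(\alpha)$, whose posterior depends only on the realization of its own $\sigma_{j(\alpha)}$; Lemma~\ref{lem:main} guarantees undermining under either such realization. Thus in every realization each problematic equilibrium is undermined, while the environment remains economic with at least three agents and full-support posteriors, so Lemma~\ref{lem:revelation} applied in each realization yields full implementability of $F$ with respect to $\{\sigma_i\}_{i\in I}$.
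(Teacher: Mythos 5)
Your proposal is correct and takes essentially the same route as the paper: assign each deception in $\mathcal{A}_F\setminus\mathcal{A}_F^u$ to its distinct potential underminer, invoke Lemma~\ref{lem:main} for their binary signals, give everyone else uninformative signals, and argue that the equilibrium set of the direct mechanism is unaffected because non-signaled agents' beliefs are unchanged while signaled agents have truth-telling as a dominant strategy under any full-support posterior. The paper phrases the last step as an exact equivalence of the equilibrium conditions before and after the signals rather than your one-directional "no new equilibria" contrapositive, but the underlying facts and the final application of Lemmas~\ref{lem:revelation} and~\ref{lem:main} are the same.
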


Establishing the result hinges on showing that the set of Bayes Nash equilibria of the direct mechanism remain unchanged after agents receive the signals constructed and applying lemmas \ref{lem:revelation} and \ref{lem:main}. In particular, the signals do not create additional undesirable equilibria of the direct mechanism that were not originally present. The theorem requires a separation of the agents who can undermine deceptions in $\mathcal A_F^u$ from the agents who can potentially undermine deceptions in $\mathcal A_F\setminus \mathcal A_F^u$ because it is not possible in general to guarantee that an agent who undermines a deception $\alpha\in \mathcal A_F^u$ without additional information can still undermine $\alpha$ after receiving an informative signal about other agents. This implicitly requires $|\mathcal A_F\setminus \mathcal A_F^u|\leq n-1$. 


\section{Conclusion \label{sec:conclusion}}

The main finding of the paper is that increasing the information agents have about the types of other agents can serve to eliminate undesired equilibria. We provide a sufficient condition under which a planner interested in full implementation will prefer agents have information about other agents beyond the common prior. The condition is straightforward to verify, as it requires only some strict preferences from some agents over type profiles of other agents and does not rely on any specific payoff structures. A future direction can be to relax the full-support signal assumption. Doing so would necessitate careful modeling of type spaces but could prove even more useful for the planner: Such signals can immediately enlarge the strategy spaces for agents in the direct mechanism, which can not only eliminate undesired equilibria but also cure violations of incentive compatibility.
\appendix

\section{Appendix: Omitted proofs}
\label{appendix:proofs}
\textbf{Proof of Proposition 1.}
\begin{proof}
Fix an agent $j.$ Since every agents' beliefs and every signal has full support, the only possible signal that affects agent $j$'s expected payoff is $\sigma_j.$ Suppose $\sigma_j\in\{A_{j1},\dots,A_{jk}\}$ and let $q_\ell$ be the unconditional probability that $\sigma_j=A_{j\ell}$ (so $q_\ell$ is averaged over all possible states). If $F$ is incentive-compatible with respect to agent $j$ after agent $j$ receives this signal, then \[q_\ell \mathbb E(u_j(F(\theta_{-j},\theta_j),\theta_j)\mid \theta_j, \sigma_j=A_{j\ell})\geq q_\ell \mathbb E(u_j(F(\theta_{-i},\theta_j'),\theta_j)\mid\theta_j, \sigma_j=A_{j\ell})\] for all $1\leq\ell\leq k.$ By the law of iterated expectations, \[\sum_{\ell=1}^kq_\ell \mathbb E(u_j(F(\theta_{-j},\theta_j),\theta_j)\mid \theta_j, \sigma_j=A_{j\ell})=\mathbb E(u_j(F(\theta_{-j},\theta_j),\theta_j)\mid \theta_j),\] and \[\sum_{\ell=1}^k q_\ell \mathbb E(u_j(F(\theta_{-i},\theta_j'),\theta_j)\mid\theta_j, \sigma_j=A_{j\ell})=\mathbb E(u_j(F(\theta_{-i},\theta_j'),\theta_j)\mid\theta_j).\] Therefore, it must be that \[\mathbb E(u_j(F(\theta_{-j},\theta_j),\theta_j)\mid \theta_j)\geq \mathbb E(u_j(F(\theta_{-i},\theta_j'),\theta_j)\mid\theta_j),\] so $F$ was incentive-compatible prior to agent $j$ receiving $\sigma_j$.
\end{proof}

\textbf{Proof of Lemma 1.}
\begin{proof}
  The only if direction is direct from the definitions. For the if direction, consider the following mechanism $\Gamma=\langle M,h\rangle$ to implement the SCF $F.$ The mechanism, slightly adapted from \cite{jackson91} to implement a single SCF $F$ rather than a set of SCFs, was shown to fully implement any SCF $F$ satisfying incentive compatibility and Bayesian monotonicity in economic environments. Let $S=\max_i|\Theta_i|$ and $K=n(S+1).$ Let $V=\{0,1,\dots,S^2\}^K,$ so $V$ consists of $K$-dimensional integer vectors with entries between $0$ and $S^2.$ Let $X$ be the space of all possible SCFs $F:\Theta\to A.$  Let $$M_i=\Theta_i\times\{\{\varnothing\}\cup V\} \times X\times \{\{\varnothing\}\cup X\}.$$  For any message profile $m=(m_1,\dots,m_n)\in M$, let $\theta^m\in \Theta$ be the type profile given by the first entry of every agent's message. The following rules specify $h$. 
  
  \underline{Rule 1:} Suppose there is an $i\in I$ such that every $m_j$ for $j\neq i$ is of the form $(\cdot, \varnothing, \cdot,\varnothing)$. Additionally, $m_i$ is of the form $(\cdot, \varnothing, \cdot,\varnothing)$, or $(\cdot, \cdot, \cdot,\varnothing).$ Then $h(m)=F(\theta^m).$

    \underline{Rule 2:} Suppose there is an $i\in I$ such that every $m_j$ for $j\neq i$ is of the form $(\cdot, \varnothing, \cdot,\varnothing)$. Additionally, $m_i$ is of the form $(\cdot,\cdot, \cdot, y)$ for some $y\in X$. Let $\theta_i$ be the first entry of $m_i,$ and define $y_{\theta_i}\in X$ by $y_{\theta_i}(\theta)=y(\theta_{-i},\theta_i)$ for all $\theta\in\Theta.$ Then
    \[h(m)=  \begin{cases}
        y(\theta^m) & \text{if } \mathbb E(u_i(F(\theta_{-i},\theta_i'),\theta_i')\mid\theta_i')\geq \mathbb E(u_i(y_{\theta_i}(\theta_{-i},\theta_i'),\theta_i')\mid\theta_i') \text{ for all } \theta_i'\in\Theta_i\\
        F(\theta^m) & \text{if } \mathbb E(u_i(y_{\theta_i}(\theta_{-i},\theta_i'),\theta_i')\mid\theta_i')> \mathbb E(u_i(F(\theta_{-i},\theta_i'),\theta_i')\mid\theta_i') \text{ for some } \theta_i'\in\Theta_i
    \end{cases}.\]

    \underline{Rule 3:} Suppose neither rule 1 nor rule 2 applies. For $1\leq k\leq 4,$ let $m_{i}^k$ denote the $k$th entry of the message of agent $i.$ Then $h(m)=m_{i^*}^3(\theta^m)$, where $i^*$ is determined by the following. Let $I_0$ be the set of agents $i\in I$ that choose to specify an integer vector in their message, so $I_0=\{i\mid m_i^2\neq \varnothing\}$. For $i\in I_0$ denote $m_i^2$ by $v^i,$ and let the $l$th coordinate of the vector $v^i$ be $v_l^j.$ Let \[J(i)=\#\{j\in I_0\mid v_\ell^i=v_i^j \text{ for an integer } \ell \text{ such that } n+(j-1)S<\ell\leq n+jS\}.\] If there is $i$ such that $J(i)>J(k)$ for all $k\in I_0$ and  $k\neq i$, then $i^*=i.$ Otherwise, $i^*=\min_i(I_0).$

Rules 2 and 3 ensure that in an equilibrium $s$ of $\Gamma$, Rule 1 must apply, as otherwise one agent can unilaterally deviate to get her most preferred outcome. If rule 1 applies and $s$ is an equilibrium of $\Gamma,$ it must be the case that $(s_i^*)_{i\in I}$ — where $s_i^*(\theta_i)$ is the first coordinate of $s_i(\theta_i)$ for all $i$ and $\theta_i\in \Theta_i$ — is an equilibrium of the direct mechanism. Rule 2 ensures that $s=(s_i)_{i\in I}$ where $s_i(\theta_i)$ is of the form $(\theta_i,\varnothing, \cdot, \varnothing)$ for all $i$ and $\theta_i\in \Theta_i$ is a BNE of $\Gamma$, so $\Gamma$ clearly partially implements $F.$ If $\mathcal{A}_F=\mathcal{A}_F^u$ then rule 2 ensures that no $s$ such that $(s_i^*)_{i\in I}$ is an undesirable equilibrium of the direct mechanism constitutes an equilibrium of $\Gamma.$ Thus $\Gamma$ fully implements $F$ whenever $\mathcal{A}_F=\mathcal{A}_F^u$. 
\end{proof}

\textbf{Proof of Lemma 2.}
\begin{proof}
Suppose agent $i$ can potentially undermine $\alpha$. Therefore truth-telling is a dominant strategy for agent $i$ and there exist a type $\theta_i\in\Theta_i$ as well as distinct type profiles $\theta_{-i}^1,\theta_{-i}^2, \theta_{-i}^3, \theta_{-i}^4\in\Theta_{-i}$ such that $\alpha_{-i}(\theta_{-i}^4)\neq \theta_{-i}^1,\alpha_{-i}(\theta_{-i}^2)\neq \theta_{-i}^3$, and $\alpha_{-i}(\theta_{-i}^j)=\theta_{-i}^{j+1}$ for $j=1,3$. Furthermore, $$u_i(F(\theta_{-i}^j,\theta_i'),\theta_i')\geq u_i(F(\theta_{-i}^{j+1},\theta_i'),\theta_i')$$ for $j=1,3$ and all $\theta_i'\in\Theta_i,$ and each inequality is strict for some type $\theta_i.$ Let $\sigma_i\in\{0,1\}$ be a binary signal such that if $\theta_{-i}\in\{\theta_{-i}^1,\theta_{-i}^4\}\cup \alpha_{-i}^{-1}(\theta_{-i}^3)$, then $P(\sigma_i=0\mid \theta_{-i})=\tau$ for some accuracy $0.5<\tau<1$. Additionally, if $\theta_{-i}\notin\{\theta_{-i}^1,\theta_{-i}^4\}\cup \alpha_{-i}^{-1}(\theta_{-i}^3)$, then $P(\sigma_i=1\mid \theta_{-i})=\tau.$ With slight abuse of notation we write $\{\theta_{-i}^1,\theta_{-i}^4, \alpha_{-i}^{-1}(\theta_{-i}^3)\}$ to mean $\{\theta_{-i}^1,\theta_{-i}^4\}\cup \alpha_{-i}^{-1}(\theta_{-i}^3).$ Since agent $i$ can potentially undermine $\alpha$, we have from the definition that $\{\alpha_{-i}^{-1}(\theta_{-i}^1),\theta_{-i}^2,\theta_{-i}^3\}\subset \Theta_{-i}\setminus \{\theta_{-i}^1,\theta_{-i}^4,\alpha_{-i}^{-1}(\theta_{-i}^3)\}$. Thus, upon receiving $\sigma_i=0$, agent $i$ will place probability \[\tau\mu(\theta_{-i}'\mid\theta_i',\theta_{-i}\in\{\theta_{-i}^1,\theta_{-i}^4, \alpha_{-i}^{-1}(\theta_{-i}^3)\})\] on all $\theta_{-i}'\in \{\theta_{-i}^1,\theta_{-i}^4, \alpha_{-i}^{-1}(\theta_{-i}^3)\}$ and probability \[(1-\tau)\mu(\theta_{-i}'\mid\theta_i',\theta_{-i}\notin\{\theta_{-i}^1,\theta_{-i}^4, \alpha_{-i}^{-1}(\theta_{-i}^3)\})\] on all $\theta_{-i}'\notin \{\theta_{-i}^1,\theta_{-i}^4, \alpha_{-i}^{-1}(\theta_{-i}^3)\}.$ Here $\mu(\cdot\mid\theta_i',\theta_{-i}\in\{\theta_{-i}^1,\theta_{-i}^4, \alpha_{-i}^{-1}(\theta_{-i}^3)\})$ represents the beliefs agent $i$ would have over $\Theta_{-i}$ conditional on knowing her type is $\theta_i'$, and the type profile $\theta_{-i}$ of other agents is in the set $\{\theta_{-i}^1,\theta_{-i}^4, \alpha_{-i}^{-1}(\theta_{-i}^3)\}.$ We will show that there exists an accurate enough $\tau$ such that agent $i$ can undermine $\alpha$.

To prove the lemma, $\alpha$ must be undermined by agent $i$ in either realization of the signal, so first suppose $\sigma_i=0.$ Now let $y:\Theta_{-i}\to A$ be defined by $y(\theta_{-i}^1)=F(\theta_{-i}^2,\alpha_i(\theta_i)), y(\theta_{-i}^2)=F(\theta_{-i}^1,\alpha_i(\theta_i))$, and $y(\theta_{-i})=F(\theta_{-i},\alpha_i(\theta_i))$ for all $\theta_{-i}\neq \theta_{-i}^1,\theta_{-i}^2.$ This construction of $y$ is a general formulation of the alternate SCFs constructed for the main example. It first needs to be shown that \[\mathbb E(u_i(F(\theta_{-i},\theta_i'),\theta_i')\mid\theta_i',\sigma_i=0)\geq \mathbb E(u_i(y(\theta_{-i}),\theta_i')\mid\theta_i',\sigma_i=0)\] for all $\theta_i'\in\Theta_i.$ To see this, observe that the following holds for sufficiently accurate $\tau$.
\begin{align*}
    \mathbb E(u_i(F&(\theta_{-i},\theta_i'),\theta_i')\mid\theta_i',\sigma_i=0)= \tau \mathbb E(u_i(F(\theta_{-i},\theta_i'),\theta_i')\mid\theta_i',\theta_{-i}\in\{\theta_{-i}^1,\theta_{-i}^4,\alpha_{-i}^{-1}(\theta_{-i}^3)\})\\ &+ (1-\tau) \mathbb E(u_i(F(\theta_{-i},\theta_i'),\theta_i')\mid\theta_i',\theta_{-i}\notin \{\theta_{-i}^1,\theta_{-i}^4,\alpha_{-i}^{-1}(\theta_{-i}^3)\})\\
    &\geq \tau \mu(\theta_{-i}^1\mid \theta_i',\theta_{-i}\in \{\theta_{-i}^1,\theta_{-i}^4,\alpha_{-i}^{-1}(\theta_{-i}^3)\})u_i(F(\theta_{-i}^2,\theta_i'),\theta_i')\\
    &+ \tau \mu(\theta_{-i}^4\mid \theta_i',\theta_{-i}\in \{\theta_{-i}^1,\theta_{-i}^4,\alpha_{-i}^{-1}(\theta_{-i}^3)\})u_i(F(\theta_{-i}^4,\theta_i'),\theta_i')\\
    &+ \tau \sum_{\theta_{-i}'\in\alpha_{-i}^{-1}(\theta_{-i}^3)} \mu(\theta_{-i}'\mid \theta_i',\theta_{-i}\in \{\theta_{-i}^1,\theta_{-i}^4,\alpha_{-i}^{-1}(\theta_{-i}^3)\})u_i(F(\theta_{-i}',\theta_i'),\theta_i')\\
    &+(1-\tau)\mu(\theta_{-i}^2\mid \theta_i',\theta_{-i}\notin \{\theta_{-i}^1,\theta_{-i}^4,\alpha_{-i}^{-1}(\theta_{-i}^3)\})u_i(F(\theta_{-i}^1,\theta_i'),\theta_i')\\
    &+(1-\tau)\sum_{\substack{\theta_{-i}'\in \Theta_{-i}\setminus \{\theta_{-i}^1,\theta_{-i}^4,\alpha_{-i}^{-1}(\theta_{-i}^3)\}\\ \theta_{-i}'\neq \theta_{-i}^2}}\mu(\theta_{-i}'\mid \theta_i',\theta_{-i}\notin\{\theta_{-i}^1,\theta_{-i}^4,\alpha_{-i}^{-1}(\theta_{-i}^3)\})u_i(F(\theta_{-i}',\theta_i'),\theta_i')\\
    &\geq \tau \mu(\theta_{-i}^1\mid \theta_i',\theta_{-i}\in \{\theta_{-i}^1,\theta_{-i}^4,\alpha_{-i}^{-1}(\theta_{-i}^3)\})u_i(F(\theta_{-i}^2,\alpha_i(\theta_i)),\theta_i')\\
    &+ \tau \mu(\theta_{-i}^4\mid \theta_i',\theta_{-i}\in \{\theta_{-i}^1,\theta_{-i}^4,\alpha_{-i}^{-1}(\theta_{-i}^3)\})u_i(F(\theta_{-i}^4,\alpha_i(\theta_i)),\theta_i')\\
    &+ \tau \sum_{\theta_{-i}'\in\alpha_{-i}^{-1}(\theta_{-i}^3)} \mu(\theta_{-i}'\mid \theta_i',\theta_{-i}\in \{\theta_{-i}^1,\theta_{-i}^4,\alpha_{-i}^{-1}(\theta_{-i}^3)\})u_i(F(\theta_{-i}',\alpha_i(\theta_i)),\theta_i')\\
    &+(1-\tau)\mu(\theta_{-i}^2\mid \theta_i',\theta_{-i}\notin \{\theta_{-i}^1,\theta_{-i}^4,\alpha_{-i}^{-1}(\theta_{-i}^3)\})u_i(F(\theta_{-i}^1,\alpha_i(\theta_i)),\theta_i')\\
    &+(1-\tau)\sum_{\substack{\theta_{-i}'\in \Theta_{-i}\setminus \{\theta_{-i}^1,\theta_{-i}^4,\alpha_{-i}^{-1}(\theta_{-i}^3)\}\\ \theta_{-i}'\neq \theta_{-i}^2}}\mu(\theta_{-i}'\mid \theta_i',\theta_{-i}\notin\{\theta_{-i}^1,\theta_{-i}^4,\alpha_{-i}^{-1}(\theta_{-i}^3)\})u_i(F(\theta_{-i}',\alpha_i(\theta_i)),\theta_i')\\
    &= \mathbb E(u_i(y(\theta_{-i}),\theta_i')\mid\theta_i',\sigma_i=0),
\end{align*}
as desired. The first inequality is implied by our assumption that agent $i$ always prefers $\theta_{-i}^1$ to $\theta_{-i}^2$ and holds whenever \[\tau \mu(\theta_{-i}^1\mid \theta_i',\theta_{-i}\in \{\theta_{-i}^1,\theta_{-i}^4,\alpha_{-i}^{-1}(\theta_{-i}^3)\})\geq (1-\tau)\mu(\theta_{-i}^2\mid \theta_i',\theta_{-i}\notin \{\theta_{-i}^1,\theta_{-i}^4,\alpha_{-i}^{-1}(\theta_{-i}^3)\}),\] which can be done for sufficiently large $\tau.$ The second inequality is because truth-telling is a dominant strategy for agent $i.$ The last equality follows from the definition of $y.$ Thus, the desired inequality holds.

It also needs to be shown that there is sufficiently large $\tau$ such that \[\mathbb E(u_i(F(\alpha(\theta_{-i},\theta_i)),\theta_i)\mid\theta_i,\sigma_i=0)<\mathbb E(u_i(y(\alpha_{-i}(\theta_{-i})),\theta_i)\mid\theta_i,\sigma_i=0).\] To see this, as before let $\alpha_{-i}^{-1}(\theta_{-i}^1)$ be the (possibly empty) set of $\theta_{-i}$ such that $\alpha_{-i}(\theta_{-i})=\theta_{-i}^1$. Note that our assumptions imply that $\alpha_{-i}^{-1}(\theta_{-i}^1)$ is disjoint from $\{\theta_{-i}^1,\theta_{-i}^4,\alpha_{-i}^{-1}(\theta_{-i}^3)\}$. So,
\begin{align*}
\mathbb E(u_i&(F(\alpha (\theta_{-i},\theta_i)),\theta_i)\mid\theta_i,\sigma_i=0) = \tau\mu(\theta_{-i}^1\mid \theta_i,\theta_{-i}\in \{\theta_{-i}^1,\theta_{-i}^4,\alpha_{-i}^{-1}(\theta_{-i}^3)\}) u_i(F(\theta_{-i}^2,\alpha_i(\theta_i)),\theta_i)\\
    &+ \tau \mu(\theta_{-i}^4\mid \theta_i,\theta_{-i}\in \{\theta_{-i}^1,\theta_{-i}^4,\alpha_{-i}^{-1}(\theta_{-i}^3)\})u_i(F(\alpha_{-i}(\theta_{-i}^4),\alpha_i(\theta_i)),\theta_i)\\
    &+ \tau \sum_{\theta_{-i}'\in \alpha_{-i}^{-1}(\theta_{-i}^3)}\mu(\theta_{-i}'\mid \theta_i,\theta_{-i}\in \{\theta_{-i}^1,\theta_{-i}^4,\alpha_{-i}^{-1}(\theta_{-i}^3)\}) u_i(F(\theta_{-i}^3,\alpha_i(\theta_i)),\theta_i) \\
    &+ (1-\tau) \sum_{\theta_{-i}'\in \alpha_{-i}^{-1}(\theta_{-i}^1)}\mu(\theta_{-i}'\mid \theta_i,\theta_{-i}\notin \{\theta_{-i}^1,\theta_{-i}^4,\alpha_{-i}^{-1}(\theta_{-i}^3)\}) u_i(F(\theta_{-i}^1,\alpha_i(\theta_i)),\theta_i) \\
    &+(1-\tau) \sum_{\substack{\theta_{-i}'\notin\alpha_{-i}^{-1}(\theta_{-i}^1)\\ \theta_{-i}'\notin \{\theta_{-i}^1,\theta_{-i}^4,\alpha_{-i}^{-1}(\theta_{-i}^3)\}}}\mu(\theta_{-i}'\mid \theta_i,\theta_{-i}\notin \{\theta_{-i}^1,\theta_{-i}^4,\alpha_{-i}^{-1}(\theta_{-i}^3)\}) u_i(F(\alpha_{-i}(\theta_{-i}'),\alpha_i(\theta_i)),\theta_i).
\end{align*}

Additionally, \begin{align*}
\mathbb E(u_i&(y(\alpha_{-i}(\theta_{-i})),\alpha_i(\theta_i))\mid\theta_i,\sigma_i=0) = \tau\mu(\theta_{-i}^1\mid \theta_i,\theta_{-i}\in \{\theta_{-i}^1,\theta_{-i}^4,\alpha_{-i}^{-1}(\theta_{-i}^3)\}) u_i(F(\theta_{-i}^1,\alpha_i(\theta_i)),\theta_i)\\
    &+ \tau \mu(\theta_{-i}^4\mid \theta_i,\theta_{-i}\in \{\theta_{-i}^1,\theta_{-i}^4,\alpha_{-i}^{-1}(\theta_{-i}^3)\})u_i(F(\alpha_{-i}(\theta_{-i}^4),\alpha_i(\theta_i)),\theta_i)\\
    &+ \tau \sum_{\theta_{-i}'\in \alpha_{-i}^{-1}(\theta_{-i}^3)}\mu(\theta_{-i}'\mid \theta_i,\theta_{-i}\in \{\theta_{-i}^1,\theta_{-i}^4,\alpha_{-i}^{-1}(\theta_{-i}^3)\}) u_i(F(\theta_{-i}^3,\alpha_i(\theta_i)),\theta_i) \\
    &+ (1-\tau) \sum_{\theta_{-i}'\in \alpha_{-i}^{-1}(\theta_{-i}^1)}\mu(\theta_{-i}'\mid \theta_i,\theta_{-i}\notin \{\theta_{-i}^1,\theta_{-i}^4,\alpha_{-i}^{-1}(\theta_{-i}^3)\}) u_i(F(\theta_{-i}^2,\alpha_i(\theta_i)),\theta_i) \\
    &+(1-\tau) \sum_{\substack{\theta_{-i}'\notin\alpha_{-i}^{-1}(\theta_{-i}^1)\\ \theta_{-i}'\notin \{\theta_{-i}^1,\theta_{-i}^4,\alpha_{-i}^{-1}(\theta_{-i}^3)\}}}\mu(\theta_{-i}'\mid \theta_i,\theta_{-i}\notin \{\theta_{-i}^1,\theta_{-i}^4,\alpha_{-i}^{-1}(\theta_{-i}^3)\}) u_i(y(\alpha_{-i}(\theta_{-i}')),\theta_i).
\end{align*}

Comparing these expressions, first observe that \[\sum_{\substack{\theta_{-i}'\notin\alpha_{-i}^{-1}(\theta_{-i}^1)\\ \theta_{-i}'\notin \{\theta_{-i}^1,\theta_{-i}^4,\alpha_{-i}^{-1}(\theta_{-i}^3)\}}}\mu(\theta_{-i}'\mid \theta_i,\theta_{-i}\notin \{\theta_{-i}^1,\theta_{-i}^4,\alpha_{-i}^{-1}(\theta_{-i}^3)\}) u_i(F(\alpha_{-i}(\theta_{-i}'),\alpha_i(\theta_i)),\theta_i)\] \[\leq \sum_{\substack{\theta_{-i}'\notin\alpha_{-i}^{-1}(\theta_{-i}^1)\\ \theta_{-i}'\notin \{\theta_{-i}^1,\theta_{-i}^4,\alpha_{-i}^{-1}(\theta_{-i}^3)\}}}\mu(\theta_{-i}'\mid \theta_i,\theta_{-i}\notin \{\theta_{-i}^1,\theta_{-i}^4,\alpha_{-i}^{-1}(\theta_{-i}^3)\}) u_i(y(\alpha_{-i}(\theta_{-i}')),\theta_i).\] This is because the only possible $\theta_{-i}$ for which terms in the sum will differ are for $\theta_{-i}\in \alpha_{-i}^{-1}(\theta_{-i}^2)$ such that $\theta_{-i}\neq \theta_{-i}^1.$ In the former summation, the corresponding term will be \[K_2=\mu(\theta_{-i}'\mid \theta_i,\theta_{-i}\notin \{\theta_{-i}^1,\theta_{-i}^4,\alpha_{-i}^{-1}(\theta_{-i}^3)\}) u_i(F(\theta_{-i}^2,\alpha_i(\theta_i)),\theta_i).\] The corresponding term in the latter summation, which is  \[K_1=\mu(\theta_{-i}'\mid \theta_i,\theta_{-i}\notin \{\theta_{-i}^1,\theta_{-i}^4,\alpha_{-i}^{-1}(\theta_{-i}^3)\}) u_i(F(\theta_{-i}^1,\alpha_i(\theta_i)),\theta_i).\] If $\alpha_i(\theta_i)=\theta_i,$ then $K_2<K_1$ is automatic from the requirement that agent $i$ can potentially undermine $\alpha$. If $\alpha_i(\theta_i)\neq\theta_i$, then $u_i$ must satisfy \[u_i(F(\theta_{-i}^2,\alpha_i(\theta_i)), \theta_i)=u_i(F(\theta_{-i}^2,\theta_i),\theta_i)\] and \[u_i(F(\theta_{-i}^1,\alpha_i(\theta_i)),\theta_i)=u_i(F(\theta_{-i}^1,\theta_i),\theta_i)\] because truth-telling is a dominant strategy for agent $i.$ Thus the assumption that agent $i$ can potentially undermine $\alpha$ also implies $K_2<K_1.$

Thus, to guarantee \[\mathbb E(u_i(F(\alpha(\theta_{-i},\theta_i)),\theta_i)\mid\theta_i,\sigma_i=0)<\mathbb E(u_i(y(\alpha_{-i}(\theta_{-i})),\theta_i)\mid\theta_i,\sigma_i=0),\] it  suffices to pick $\tau$ sufficiently large so that \[\tau\mu(\theta_{-i}^1\mid \theta_i,\theta_{-i}\in \{\theta_{-i}^1,\theta_{-i}^4,\alpha_{-i}^{-1}(\theta_{-i}^3)\}) u_i(F(\theta_{-i}^2,\alpha_i(\theta_i)),\theta_i)\]\[+(1-\tau) \sum_{\theta_{-i}'\in \alpha_{-i}^{-1}(\theta_{-i}^1)}\mu(\theta_{-i}'\mid \theta_i,\theta_{-i}\notin \{\theta_{-i}^1,\theta_{-i}^4,\alpha_{-i}^{-1}(\theta_{-i}^3)\}) u_i(F(\theta_{-i}^1,\alpha_i(\theta_i)),\theta_i)\]\[< \tau\mu(\theta_{-i}^1\mid \theta_i,\theta_{-i}\in \{\theta_{-i}^1,\theta_{-i}^4,\alpha_{-i}^{-1}(\theta_{-i}^3)\}) u_i(F(\theta_{-i}^1,\alpha_i(\theta_i)),\theta_i)\]\[+(1-\tau) \sum_{\theta_{-i}'\in \alpha_{-i}^{-1}(\theta_{-i}^1)}\mu(\theta_{-i}'\mid \theta_i,\theta_{-i}\notin \{\theta_{-i}^1,\theta_{-i}^4,\alpha_{-i}^{-1}(\theta_{-i}^3)\}) u_i(F(\theta_{-i}^2,\alpha_i(\theta_i)),\theta_i),\] which is again possible by the condition that agent $i$ can potentially undermine $\alpha$ and similar reasoning as above. Thus $\alpha$ is undermined in the signal realization $\sigma_i=0.$

The proof is analogous for the realization $\sigma_i=1.$
\end{proof}
\textbf{Proof of Theorem 1.}
\begin{proof}
Let $\mathcal{A}_F\setminus \mathcal{A}_F^u=\{\alpha_1,\dots,\alpha_k\}$ and $j_\ell$ be the agent that can potentially undermine $\alpha_\ell$ for all $1\leq \ell\leq k$. By lemma \ref{lem:main}, for each $\ell$ there is a binary signal $\sigma_{j_\ell}$ such that agent $j_\ell$ can undermine $\alpha_\ell$ upon receiving $\sigma_{j_\ell}$. By assumption each $j_\ell$ is distinct and $\{j_1,\dots,j_k\}\subset J.$ For all $i\in I\setminus \{j_1,\dots,j_k\}$ let $\sigma_i$ be uninformative. Lemma \ref{lem:revelation} directly implies that $F$ is fully implementable with respect to these signals $\{\sigma_i\}_{i\in I}$ as long as the set of equilibria in the direct mechanism is unchanged after the agents $j_1,\dots,j_k$ receive their signals. More precisely, if we take $\mathcal{B}_F$ to be the set of Bayes Nash equilibria of the direct mechanism when agents only know the common prior $\mu\in\Delta(\Theta)$, we want to show that $s\in \mathcal{B}_F$ if and only if $s$ is an equilibrium of the direct mechanism after the agents $j_1,\dots,j_k$ receive their signals.

To see this, let $s=(s_1,\dots,s_n)$ be a strategy profile. Then $s$ is an equilibrium in the direct mechanism when no agent has additional information about other agents iff for all $i$ and $\theta_i,\theta_i'\in\Theta_{i}$ we have \[\sum_{\theta_{-i}}u(F(s_i(\theta_i),s_{-i}(\theta_{-i})),\theta_i)\mu(\theta_{-i}\mid\theta_i)\geq \sum_{\theta_{-i}}u(F(\theta_i',s_{-i}(\theta_{-i})),\theta_i)\mu(\theta_{-i}\mid\theta_i).\tag{1}\] On the other hand, $s$ is an equilibrium in the direct mechanism after agents receive the signal realizations $\{\sigma_i\}_{i\in I}$ iff for all $i\in I$,  $\theta_i,\theta_i'\in\Theta_{i}$, and signal realizations $x$ in the range of $\sigma_i$, we have \[\sum_{\theta_{-i}}u(F(s_i(\theta_i),s_{-i}(\theta_{-i})),\theta_i)\mu(\theta_{-i}\mid\theta_i,\sigma_i=x)\geq \sum_{\theta_{-i}}u(F(\theta_i',s_{-i}(\theta_{-i})),\theta_i)\mu(\theta_{-i}\mid\theta_i, \sigma_i=x).\tag{2}\] When $i\notin \{j_1,\dots,j_k\}$ the signal $\sigma_i$ is uninformative so the inequalities don't change between (1) and (2). If $i\in\{j_1,\dots,j_k\}$, then truth-telling is a dominant strategy for agent $i$, so it must be that \[u(F(\theta_i,s_{-i}(\theta_{-i})),\theta_i)\geq u(F(\theta_i',s_{-i}(\theta_{-i})),\theta_i)\] for all $\theta_i,\theta_i'\in \Theta_i$, since $\mu$ and $\sigma_i$ have full support. Thus both (1) and (2) imply \[u(F(s_i(\theta_i),s_{-i}(\theta_{-i})),\theta_i)=u(F(\theta_i,s_{-i}(\theta_{-i})),\theta_i)\] for all $i\in \{j_1,\dots,j_k\}$ and  $\theta_i\in \Theta_i$. This completes the proof.
\end{proof}
\clearpage
\begin{singlespace}
\bibliographystyle{aer}
\bibliography{sources.bib}

@article{jackson91,
	title = {Bayesian {Implementation}},
	volume = {59},
	issn = {0012-9682},
	url = {https://www.jstor.org/stable/2938265},
	doi = {10.2307/2938265},
	abstract = {The subject of this paper is the decentralization of decision making when agents have information which is incomplete and possibly exclusive. The first theorem states that in economic environments with three or more individuals, there exists a mechanism whose Bayesian equilibria coincide with a desired collection of social choice functions if and only if closure, incentive compatibility, and Bayesian monotonicity conditions are satisfied. With regards to the previous literature, this closes the gap between necessary and sufficient conditions by using a slightly stronger definition of Bayesian monotonicity, and extends the definition of economic environments to permit externalities. The second theorem extends the analysis to noneconomic environments. It states that closure, incentive compatibility, and a combination of monotonicity and no-veto conditions, are sufficient for implementation, when there are at least three individuals. An example shows that in a Bayesian setting, the second theorem does not hold for separate monotonicity and no-veto style conditions.},
	number = {2},
	urldate = {2023-09-23},
	journal = {Econometrica},
	author = {Jackson, Mathew O.},
	year = {1991},
	note = {Publisher: [Wiley, Econometric Society]},
	pages = {461--477},
	file = {JSTOR Full Text PDF:/Users/ericyan/Zotero/storage/RVEIWWF4/Jackson - 1991 - Bayesian Implementation.pdf:application/pdf},
}

@article{BM22,
	title = {Optimal {Information} {Disclosure} in {Classic} {Auctions}},
	volume = {4},
	url = {https://www.aeaweb.org/articles?id=10.1257/aeri.20210504},
	doi = {10.1257/aeri.20210504},
	abstract = {We characterize the revenue-maximizing information structure in the second-price auction. The seller faces a trade-off: more information improves the efficiency of the allocation but creates higher information rents for bidders. The information disclosure policy that maximizes the revenue of the seller is to fully reveal low values (where competition is high) but to pool high values (where competition is low). The size of the pool is determined by a critical quantile that is independent of the distribution of values and only dependent on the number of bidders. We discuss how this policy provides a rationale for conflation in digital advertising.},
	language = {en},
	number = {3},
	urldate = {2023-09-23},
	journal = {American Economic Review: Insights},
	author = {Bergemann, Dirk and Heumann, Tibor and Morris, Stephen and Sorokin, Constantine and Winter, Eyal},
	month = sep,
	year = {2022},
	keywords = {Communication, Information and Knowledge, Learning, Auctions, Asymmetric and Private Information, Belief, Mechanism Design, Search, Unawareness, Advertising},
	pages = {371--388},
	file = {Full Text PDF:/Users/ericyan/Zotero/storage/G4NRTCV5/Bergemann et al. - 2022 - Optimal Information Disclosure in Classic Auctions.pdf:application/pdf},
}

@article{bergemann_robust_2005,
	title = {Robust {Mechanism} {Design}},
	volume = {73},
	issn = {0012-9682},
	url = {https://www.jstor.org/stable/3598751},
	abstract = {The mechanism design literature assumes too much common knowledge of the environment among the players and planner. We relax this assumption by studying mechanism design on richer type spaces. We ask when ex post implementation is equivalent to interim (or Bayesian) implementation for all possible type spaces. The equivalence holds in the case of separable environments; examples of separable environments arise (1) when the planner is implementing a social choice function (not correspondence) and (2) in a quasilinear environment with no restrictions on transfers. The equivalence fails in general, including in some quasilinear environments with budget balance. In private value environments, ex post implementation is equivalent to dominant strategies implementation. The private value versions of our results offer new insights into the relationship between dominant strategy implementation and Bayesian implementation.},
	number = {6},
	urldate = {2023-09-23},
	journal = {Econometrica},
	author = {Bergemann, Dirk and Morris, Stephen},
	year = {2005},
	note = {Publisher: [Wiley, Econometric Society]},
	pages = {1771--1813},
	file = {JSTOR Full Text PDF:/Users/ericyan/Zotero/storage/KSDIE345/Bergemann and Morris - 2005 - Robust Mechanism Design.pdf:application/pdf},
}

@article{Maskin99,
	title = {Nash {Equilibrium} and {Welfare} {Optimality}},
	volume = {66},
	issn = {0034-6527},
	url = {https://www.jstor.org/stable/2566947},
	abstract = {If A is a set of social alternatives, a social choice rule (SCR) assigns a subset of A to each potential profile of individuals' preferences over A, where the subset is interpreted as the set of "welfare optima". A game form (or "mechanism") implements the social choice rule if, for any potential profile of preferences, (i) any welfare optimum can arise as a Nash equilibrium of the game form (implying, in particular, that a Nash equilibrium exists) and, (ii) all Nash equilibria are welfare optimal. The main result of this paper establishes that any SCR that satisfies two properties-monotonicity and no veto power-can be implemented by a game form if there are three or more individuals. The proof is constructive.},
	number = {1},
	urldate = {2024-01-30},
	journal = {The Review of Economic Studies},
	author = {Maskin, Eric},
	year = {1999},
	note = {Publisher: [Oxford University Press, Review of Economic Studies, Ltd.]},
	pages = {23--38},
	file = {JSTOR Full Text PDF:/Users/ericyan/Zotero/storage/AIQ2FCX9/Maskin - 1999 - Nash Equilibrium and Welfare Optimality.pdf:application/pdf},
}

@article{PS89,
	title = {Implementation with {Incomplete} {Information} in {Exchange} {Economies}},
	volume = {57},
	issn = {0012-9682},
	url = {https://www.jstor.org/stable/1912575},
	doi = {10.2307/1912575},
	abstract = {In this paper, we analyze the problem of designing incentive compatible mechanisms in pure exchange economic environments when agents have incomplete information. The equilibrium concept employed is Bayesian Nash equilibrium and the notion of implemantation is full implementation, which is stronger than the more commonly employed notion of truthful implementation. An allocation rule is truthfully implementable if there exists a direct mechanism to which truth telling is an equilibrium and which yields the allocation rule as its truthful equilibrium outcome. An allocation rule is fully implementable if there exists mechanism which yields the allocation rule as its unique equilibrium outcome. More generally, a set of allocation rules, or a social choice set, is fully implementable if there exist a mechanism whose equilibrium outcomes coincide with the set. This stronger notion of implemention avoids the well known problems of multiple equilibria which arise in direct revelation games. We develop a condition, termed Bayesian monotonicity, which we show is necessary for full implementation. An incentive compatibility condition is also necessary. We prove that Bayesian monotonicity and a slightly stronger incentive compatibility condition are sufficient for full implementation when there are at least three agents. We present several examples of allocation rules which do and do not satisfy our condition. One example is that of an allocation rule which is fully inplementable by an indirect mechanism, but for which every equivalent direct mechanism has multiple equilibrium outcomes.},
	number = {1},
	urldate = {2024-01-30},
	journal = {Econometrica},
	author = {Palfrey, Thomas R. and Srivastava, Sanjay},
	year = {1989},
	note = {Publisher: [Wiley, Econometric Society]},
	pages = {115--134},
	file = {JSTOR Full Text PDF:/Users/ericyan/Zotero/storage/5XWKT43D/Palfrey and Srivastava - 1989 - Implementation with Incomplete Information in Exch.pdf:application/pdf},
}

@article{PS86,
	title = {Implementation in differential information economies},
	volume = {39},
	issn = {0022-0531},
	url = {https://www.sciencedirect.com/science/article/pii/0022053186900189},
	doi = {10.1016/0022-0531(86)90018-9},
	abstract = {We consider the problem of implementation of social choice correspondences in differential information economies. We provide necessary conditions for implementation and slightly stronger conditions which are sufficient to guarantee implementation},
	number = {1},
	urldate = {2024-01-30},
	journal = {Journal of Economic Theory},
	author = {Postlewaite, Andrew and Schmeidler, David},
	month = jun,
	year = {1986},
	pages = {14--33},
	file = {ScienceDirect Snapshot:/Users/ericyan/Zotero/storage/IGCT7CFF/0022053186900189.html:text/html},
}

@article{GP23,
	title = {Informationally {Simple} {Incentives}},
	volume = {131},
	issn = {0022-3808},
	url = {https://www.journals.uchicago.edu/doi/abs/10.1086/722089},
	doi = {10.1086/722089},
	abstract = {We consider a mechanism design setting in which agents can acquire costly information on their preferences as well as others’. A mechanism is informationally simple if agents have no incentive to learn about others’ preferences. This property is of interest for two reasons. First, it is a necessary condition for the existence of dominant-strategy equilibria in the extended game. Second, this endogenizes an “independent-private-value” property of the interim information structure. We show that, generically, a mechanism is informationally simple if and only if it satisfies a separability condition that rules out most economically meaningful mechanisms.},
	number = {3},
	urldate = {2024-01-31},
	journal = {Journal of Political Economy},
	author = {Gleyze, Simon and Pernoud, Agathe},
	month = mar,
	year = {2023},
	note = {Publisher: The University of Chicago Press},
	pages = {802--837},
}

@article{BV02,
	title = {Information {Acquisition} and {Efficient} {Mechanism} {Design}},
	volume = {70},
	issn = {0012-9682},
	url = {https://www.jstor.org/stable/2692306},
	abstract = {We consider a general mechanism design setting where each agent can acquire (covert) information before participating in the mechanism. The central question is whether a mechanism exists that provides the efficient incentives for information acquisition ex-ante and implements the efficient allocation conditional on the private information ex-post. It is shown that in every private value environment the Vickrey-Clark-Groves mechanism guarantees both ex-ante as well as ex-post efficiency. In contrast, with common values, ex-ante and ex-post efficiency cannot be reconciled in general. Sufficient conditions in terms of sub- and supermodularity are provided when (all) ex-post efficient mechanisms lead to private under- or over-acquisition of information.},
	number = {3},
	urldate = {2024-01-31},
	journal = {Econometrica},
	author = {Bergemann, Dirk and Välimäki, Juuso},
	year = {2002},
	note = {Publisher: [Wiley, Econometric Society]},
	pages = {1007--1033},
	file = {JSTOR Full Text PDF:/Users/ericyan/Zotero/storage/WV96XNS5/Bergemann and Välimäki - 2002 - Information Acquisition and Efficient Mechanism De.pdf:application/pdf},
}

@article{ganuza04,
	title = {Ignorance {Promotes} {Competition}: {An} {Auction} {Model} with {Endogenous} {Private} {Valuations}},
	volume = {35},
	issn = {0741-6261},
	shorttitle = {Ignorance {Promotes} {Competition}},
	url = {https://www.jstor.org/stable/1593709},
	doi = {10.2307/1593709},
	abstract = {I study a situation in which an auctioneer wishes to sell an object to one of N risk-neutral bidders with heterogeneous preferences. The auctioneer does not know bidders' preferences but has private information about the characteristics of the object, and must decide how much information to reveal prior to the auction. I show that the auctioneer has incentives to release less information than would be efficient and that the amount of information released increases with the level of competition (as measured by the number of bidders). Furthermore, in a perfectly competitive market the auctioneer would provide the efficient level of information.},
	number = {3},
	urldate = {2024-02-24},
	journal = {The RAND Journal of Economics},
	author = {Ganuza, Juan-José},
	year = {2004},
	note = {Publisher: [RAND Corporation, Wiley]},
	pages = {583--598},
	file = {JSTOR Full Text PDF:/Users/ericyan/Zotero/storage/WVZBIAAK/Ganuza - 2004 - Ignorance Promotes Competition An Auction Model w.pdf:application/pdf},
}

@article{BP07,
	title = {Information structures in optimal auctions},
	volume = {137},
	issn = {0022-0531},
	url = {https://www.sciencedirect.com/science/article/pii/S0022053107000324},
	doi = {10.1016/j.jet.2007.02.001},
	abstract = {A seller wishes to sell an object to one of multiple bidders. The valuations of the bidders are privately known. We consider the joint design problem in which the seller can decide the accuracy by which bidders learn their valuation and to whom to sell at what price. We establish that optimal information structures in an optimal auction exhibit a number of properties: (i) information structures can be represented by monotone partitions, (ii) the cardinality of each partition is finite, (iii) the partitions are asymmetric across agents. We show that an optimal information structure exists.},
	number = {1},
	urldate = {2024-02-24},
	journal = {Journal of Economic Theory},
	author = {Bergemann, Dirk and Pesendorfer, Martin},
	month = nov,
	year = {2007},
	keywords = {Information, Optimal auction, Partitions, Private values},
	pages = {580--609},
	file = {ScienceDirect Snapshot:/Users/ericyan/Zotero/storage/4SDYYPL3/S0022053107000324.html:text/html},
}

@article{ES07,
	title = {Optimal {Information} {Disclosure} in {Auctions} and the {Handicap} {Auction}},
	volume = {74},
	issn = {0034-6527},
	url = {https://www.jstor.org/stable/4626158},
	abstract = {We analyse a situation where a monopolist is selling an indivisible good to risk-neutral buyers who only have an estimate of their private valuations. The seller can release, without observing, certain additional signals that affect the buyers' valuations. Our main result is that in the expected revenue-maximizing mechanism, the seller makes available all the information that she can, and her expected revenue is the same as it would be if she could observe the part of the information that is "new" to the buyers. We also show that this mechanism can be implemented by what we call a handicap auction in interesting applications. In the first round of this auction, each buyer picks a price premium from a menu offered by the seller (a smaller premium costs more). Then the seller releases the additional signals. In the second round, the buyers bid in a second-price auction where the winner pays the sum of his premium and the second highest non-negative bid. In the case of a single buyer, this mechanism simplifies to a menu of European call options.},
	number = {3},
	urldate = {2024-02-24},
	journal = {The Review of Economic Studies},
	author = {Eső, Péter and Szentes, Balázs},
	year = {2007},
	note = {Publisher: [Oxford University Press, Review of Economic Studies, Ltd.]},
	pages = {705--731},
	file = {JSTOR Full Text PDF:/Users/ericyan/Zotero/storage/92VZIKLK/Eső and Szentes - 2007 - Optimal Information Disclosure in Auctions and the.pdf:application/pdf},
}

@article{LS17,
	title = {Discriminatory {Information} {Disclosure}},
	volume = {107},
	issn = {0002-8282},
	url = {https://www.aeaweb.org/articles?id=10.1257/aer.20151743},
	doi = {10.1257/aer.20151743},
	abstract = {A seller designs a mechanism to sell a single object to a potential buyer whose private type is his incomplete information about his valuation. The seller can disclose additional 
information to the buyer about his valuation without observing its realization. In both discrete-type and continuous-type settings, we show that discriminatory 
disclosure—releasing different amounts of additional information to different buyer types—dominates full disclosure in terms of seller revenue. An implication is that 
the orthogonal decomposition technique, while an important tool in dynamic mechanism design, is generally invalid when information disclosure is part of the design.},
	language = {en},
	number = {11},
	urldate = {2024-02-24},
	journal = {American Economic Review},
	author = {Li, Hao and Shi, Xianwen},
	month = nov,
	year = {2017},
	keywords = {Belief, Communication, Consumer Economics: Theory, Asymmetric and Private Information, Information and Knowledge, Learning, Mechanism Design, Search, Unawareness},
	pages = {3363--3385},
	file = {Full Text PDF:/Users/ericyan/Zotero/storage/C6I99MSY/Li and Shi - 2017 - Discriminatory Information Disclosure.pdf:application/pdf}
}

@article{SV2010,
	series = {Mathematical {Economics}: {Special} {Issue} in honour of {Andreu} {Mas}-{Colell}, {Part} 1},
	title = {Multiplicity of mixed equilibria in mechanisms: {A} unified approach to exact and approximate implementation},
	volume = {46},
	issn = {0304-4068},
	shorttitle = {Multiplicity of mixed equilibria in mechanisms},
	url = {https://www.sciencedirect.com/science/article/pii/S0304406810000546},
	doi = {10.1016/j.jmateco.2010.06.002},
	abstract = {We characterize full implementation of social choice sets in mixed-strategy Bayesian equilibrium. Our results concern both exact and virtual mixed implementation. For exact implementation, we identify a strengthening of Bayesian monotonicity, which we refer to as mixed Bayesian monotonicity. It is shown that, in economic environments with at least three agents, mixed Bayesian implementation is equivalent to mixed Bayesian monotonicity, incentive compatibility and closure. For implementing a social choice function, the case of two-agents is also covered by these conditions and mixed Bayesian monotonicity reduces to Bayesian monotonicity. Following parallel steps, mixed virtual implementation is shown to be equivalent to mixed virtual monotonicity, incentive compatibility and closure. The key condition, mixed virtual monotonicity, is argued to be very weak. In particular, it is weaker than Abreu–Matsushima’s measurability, thereby implying that: (1) virtual implementation in mixed Bayesian equilibrium is more permissive than virtual implementation in iteratively undominated strategies, and (2) non-regular mechanisms are essential for the implementation of rules in that gap.},
	number = {5},
	urldate = {2025-06-12},
	journal = {Journal of Mathematical Economics},
	author = {Serrano, Roberto and Vohra, Rajiv},
	month = sep,
	year = {2010},
	keywords = {Approximate implementation, Exact implementation, Incentive compatibility, Incomplete information, Monotonicity},
	pages = {775--785},
	file = {ScienceDirect Snapshot:/Users/ericyan/Zotero/storage/5EVDI4VW/S0304406810000546.html:text/html},
}

@article{KamenicaGentzkow11,
author="Emir Kamenica and Matthew Gentzkow",
title="Bayesian Persuasion",
journal="American Economic Review",
volume="101",
number="6",
pages="2590-2615",
year="2011",
DOI="10.1257/aer.101.6.2590",
}

@article{rayo2010optimal,
  title={Optimal Information Disclosure},
  author={Rayo, Luis and Segal, Ilya},
  journal={Journal of Political Economy},
  volume={118},
  number={5},
  pages={949--987},
  year={2010},
  doi={10.1086/657922},
  url={https://www.jstor.org/stable/10.1086/657922}
}

@article{roesler_buyer-optimal_2017,
	title = {Buyer-{Optimal} {Learning} and {Monopoly} {Pricing}},
	volume = {107},
	issn = {0002-8282},
	url = {https://www.aeaweb.org/articles?id=10.1257/aer.20160145},
	doi = {10.1257/aer.20160145},
	abstract = {This paper analyzes a bilateral trade model where the buyer's valuation for the object is uncertain and she observes only a signal about her valuation. The seller gives a take-it-or-leave-it offer to the buyer. Our goal is to characterize those signal structures which maximize the buyer's expected payoff. We identify a buyer-optimal signal structure which generates (i) efficient trade and (ii) a unit-elastic demand. Furthermore, we show that every other buyer-optimal signal structure yields the same outcome as the one we identify: in particular, the same price.},
	language = {en},
	number = {7},
	urldate = {2025-03-08},
	journal = {American Economic Review},
	author = {Roesler, Anne-Katrin and Szentes, Balázs},
	month = jul,
	year = {2017},
	keywords = {Belief, Communication, Consumer Economics: Theory, Market Structure, Pricing, and Design: Monopoly, Asymmetric and Private Information, Information and Knowledge, Learning, Mechanism Design, Search, Monopolization Strategies, Unawareness, Monopoly},
	pages = {2072--2080},
	file = {Submitted Version:/Users/ericyan/Zotero/storage/6VPLYH3S/Roesler and Szentes - 2017 - Buyer-Optimal Learning and Monopoly Pricing.pdf:application/pdf},
}
\end{singlespace}

\end{document}